\newtheorem{theorem}{Theorem}[section]
\newtheorem*{theorem*}{Theorem}
\newtheorem{lemma}[theorem]{Lemma}
\newtheorem{corollary}{Corollary}[theorem]
\theoremstyle{definition}
\newtheorem{definition}{Definition}[section]
\theoremstyle{remark}
\newtheorem{remark}[theorem]{Remark}
\title[Min-cuts partial order and quantum entanglement]{Canonical partial ordering from min-cuts and\\quantum entanglement in random tensor networks}
\author{Miao Hu}
\email{miao.hu@irsamc.ups-tlse.fr}
\address{Univ Toulouse, CNRS, LPT, Toulouse, France}
\author{Ion Nechita}
\email{nechita@irsamc.ups-tlse.fr}
\address{Univ Toulouse, CNRS, LPT, Toulouse, France}
\begin{document}
\newcommand{\seriescon}{\ensurestackMath{\stackinset{c}{0ex}{c}{0.2ex}{\scriptscriptstyle s}{\scriptstyle\cup}}}
\newcommand{\parallelcon}{\ensurestackMath{\stackinset{c}{0ex}{c}{0.2ex}{\scriptscriptstyle p}{\scriptstyle\cup}}}
\newcommand{\mincut}{\operatorname{mincut}}
\newcommand{\maxflow}{\operatorname{maxflow}}

\begin{abstract}
    The \emph{max-flow min-cut theorem} has been recently used in the theory of random tensor networks in quantum information theory, where it is helpful for computing the behavior of important physical quantities, such as the entanglement entropy. In this paper, we extend the max-flow min-cut theorem to a relation among different \emph{partial orders} on the set of vertices of a network and introduce a new partial order for the vertices based on the \emph{min-cut structure} of the network. We apply the extended max-flow min-cut theorem to random tensor networks and find that the \emph{finite correction} to the entanglement R\'enyi entropy arising from the degeneracy of the min-cuts is given by the number of \emph{order morphisms} from the min-cut partial order to the partial order induced by non-crossing partitions on the symmetric group. Moreover, we show that the number of order morphisms corresponds to moments of a graph-dependent measure which generalizes the free Bessel law in some special cases in free probability theory.
\end{abstract}

\maketitle

\tableofcontents

\section{Introduction}

The study of cuts and flows in graphs is a cornerstone of both combinatorial optimization and mathematical physics, with deep implications for network theory, computer science, and quantum information. In a graph, a set of edges is called a \emph{cut} if by removing these edges, the number of connected components of the graph increases. Cuts are often mentioned in the study of graphs themselves, where several families of cuts of a graph with interesting properties are studied. For example, the family of all cuts defines a vector space called the cut space of the graph, and the minimum cuts (min-cut), which are members of the family of the cuts with a minimum number of edges, are found to be in duality with the maximum amount of flow (max-flow) one can send from a source to a sink through the graph \cite{diestel_graph_2017, ford_flows_1974}. 

More recently, cuts have also appeared in the field of quantum information and quantum many body physics, where graphs become tools for representing states in a complicated Hilbert space by encoding the locality structure of the interactions. One of the central concepts in many-body physics is the \emph{area law} for entanglement entropy, which states that the entanglement entropy of a subsystem typically scales with the size of its boundary rather than its volume \cite{eisert_area_2010}. This phenomenon is especially prominent in ground states of gapped local Hamiltonians, where correlations decay rapidly with distance. In the context of tensor network states, such as matrix product states (MPS) and projected entangled pair states (PEPS) \cite{orus2019tensor}, the area law manifests naturally: the entanglement entropy between two regions is bounded above by the logarithm of the bond dimension times the number of edges (or bonds) crossing the boundary between the regions. In graphical terms, this boundary corresponds to a \emph{cut} in the underlying network, and the minimal number of edges that must be severed to separate the two regions --- the \emph{minimal cut} --- directly determines the maximal entanglement entropy that can be supported by the tensor network. Thus, the area law in tensor network states is intimately connected to the combinatorial structure of minimal cuts in the associated graph, providing a bridge between physical entanglement properties and graph-theoretic quantities.

Random tensor networks are probability distributions on the set of tensor networks, usually defined by assigning a random tensor to each vertex of the network. 
In \cite{hayden_holographic_2016}, the authors have investigated such models and discovered a new relation between the entanglement in a subregion on the boundary of the network and the min-cuts of the network homologous to the subregion. This relation resembles the \emph{Ryu-Takayanagi} formula, conjectured for a conformal field theory dual to a quantum gravity theory in the anti-de Sitter space (AdS/CFT correspondence), which is increasingly important for both high-energy physics and qunatum information as it opens a path to study geometry from entanglement as well as the converse \cite{ryu_holographic_2006, rangamani_holographic_2017, chen_quantum_2022}. Random tensor networks then become one of most tractable models with a holographic property like AdS/CFT correspondence, which attracts attention for studying their entanglement properties \cite{cheng_random_2022, fitter_max-flow_2024, penington_fun_2023, dong_holographic_2021, yang_entanglement_2022, kudler-flam_negativity_2022, kudler-flam_renyi_2024, nezami_multipartite_2020, akers_reflected_2022, akers_entanglement_2023}. 

This paper is also motivated by the study of random tensor networks, with a particular emphasis on their relationship with the family of min-cuts. We introduce a partial ordering of vertices derived from the relations among the min-cuts in the cut space of a graph. We show that min-cuts contain information more than the dominant contribution to the entanglement entropy of a subregion on the boundary of a random tensor network, which is known to be proportional to the number of edges in a min-cut homologous to the subregion \cite{hayden_holographic_2016}. We will show that a finite correction to the entanglement entropy can also be computed using a certain relation among the homologous min-cuts. It turns out that this relation leads to a partial ordering of vertices which is equivalent to the partial order relations considered in \cite{fitter_max-flow_2024}. This equivalence then gives a proof to the independence of the partial order relations in \cite{fitter_max-flow_2024} from their choice of max-flows, which is necessary for justifying their method of computing the entanglement entropy in random tensor networks but is not provided with a proof by them.

There are a few notations that are used throughout this paper. For a graph $G$, the set of vertices is denoted by $V(G)$ and the set of edges is denoted by $E(G)$. The vertices of a graph are denoted by the letters $x, y, z$, the directed edges by $(x, y)$, and the undirected edges by $e$ or $e_{xy}$ if we know the vertices at the two ends of the edge are $x$ and $y$. For each undirected graph $G$, we associate it with a bi-directed graph $\overset{\rightleftharpoons}{G}$ by mapping each undirected edge $e_{xy}\in E(G)$ to two directed edges $(x, y), (y, x)\in E(\overset{\rightleftharpoons}{G})$, as given from \cref{eqn:undirected and bidirected}. For a subset $S\subseteq V(G)$, we denote its complement set by $\Bar{S}$, where $\Bar{S}$ is also equal to $V(G)-S$. Other notation used throughout the paper will be introduced in subsequent sections.

We summarize our main theorems below. Consider a flow network $(G, c, s, t)$ where $G$ is a graph, $c:E(G)\to \mathbb{R}^+$ assigns each edge $e$ a capacity $c_e$, and $s, t\in V(G)$ are two distinguished vertices that we call the source and the sink respectively. A flow $f:E(\overset{\rightleftharpoons}{G})\to \mathbb{R}$ (see \cref{def:flows}) induces a partial order $\leq_f$ (see \cref{def:transitive closure}) on the set of vertices $V(G)$ such that $x\leq_f y$ if and only if there exists a directed path $P$ from $x$ to $y$ such that $f(x_i, x_{i+1})>-c_{e_{x_ix_{i+1}}}$ for each $(x_i, x_{i+1})\in P$. Our main result refines the classical max-flow min-cut theorem by showing that the partial order induced by any max-flow coincides with a canonical partial order derived from the min-cuts, and is independent of the specific choice of max-flow. Then, we have obtained the following theorem, which refines the max-flow min-cut theorem. 
\begin{theorem*}[Informal]
    Given a flow network $(G, c, s, t)$, the partial order $\leq_f$ induced from a max-flow $f$ is equal to the canonical partial order $\leq_{\mincut}$ induced from the min-cuts of the flow network (see \cref{def:canonical parital ordering}), and hence is independent from the choice of the max-flow $f$, i.e. 
    \begin{equation*}
        \forall f, f^{\prime} \text{ max-flows,}\quad x\leq_fy \iff x\leq_{\mincut} y \iff x\leq_{f^{\prime}} y.
    \end{equation*}
\end{theorem*}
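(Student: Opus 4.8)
The plan is to prove the two nontrivial equivalences $x \leq_f y \iff x \leq_{\mincut} y$ for an arbitrarily fixed max-flow $f$; the asserted independence from the choice of max-flow is then immediate, since for two max-flows $f, f'$ both relations $\leq_f$ and $\leq_{f'}$ collapse onto the single, manifestly $f$-independent relation $\leq_{\mincut}$. The whole argument rests on a membership reformulation of the canonical order: unfolding \cref{def:canonical parital ordering}, I would first show that $x \leq_{\mincut} y$ is equivalent to the statement that every min-cut $(S,\bar S)$ with $y$ on the source side $S$ also has $x \in S$. This translates the canonical order into a condition quantified over the family of min-cuts, which is the form that interacts directly with the residual structure of a max-flow.

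The forward inclusion $\leq_f \,\subseteq\, \leq_{\mincut}$ is the short direction and amounts to complementary slackness. Fix a max-flow $f$ and a min-cut $(S,\bar S)$. By the max-flow min-cut theorem every edge directed from $S$ to $\bar S$ is saturated and every edge directed from $\bar S$ to $S$ carries no flow, so the residual graph $G_f$ has no edge leaving the source side $S$. The defining path from $x$ to $y$ with $f(x_i,x_{i+1}) > -c_{e_{x_ix_{i+1}}}$ (see \cref{def:transitive closure}) corresponds, under the antisymmetry of $f$, to a residual path from $y$ to $x$ in $G_f$. Since such a path starting at $y \in S$ cannot escape $S$, its endpoint satisfies $x \in S$; thus $y \in S \Rightarrow x \in S$ for every min-cut, which is exactly $x \leq_{\mincut} y$.

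The reverse inclusion $\leq_{\mincut} \,\subseteq\, \leq_f$ is where the real work lies, and I would argue by contraposition: assuming $x \not\leq_f y$, I must exhibit one min-cut $(S,\bar S)$ with $y \in S$ and $x \in \bar S$, witnessing $x \not\leq_{\mincut} y$. The natural candidate is the descendant set $S_0 = \{\,v : v \text{ is reachable from } y \text{ in } G_f\,\}$, which contains $y$, excludes $x$ (precisely because $x \not\leq_f y$), and is closed under outgoing residual edges. This last property makes $S_0$ \emph{tight}: every edge leaving $S_0$ is saturated and every edge entering it carries no flow, so its cut capacity equals the max-flow value. The remaining difficulty is the anchoring to the source and the sink, since $S_0$ need not contain $s$ nor exclude $t$ and so is not yet an admissible $s$--$t$ cut. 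To repair this I would invoke the distributive-lattice structure of tight sets coming from submodularity of the cut function, namely that unions and intersections of tight sets remain tight; combining $S_0$ with the extremal min-cut source sides $S_{\min}$ and $S_{\max}$ via these operations yields a genuine $s$--$t$ min-cut.

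The crux of the theorem, and the step I expect to be the main obstacle, is precisely this last construction: showing that the lattice combination can be carried out so as to keep $y$ on the source side and $x$ on the sink side simultaneously with the $s$--$t$ constraint. This requires careful bookkeeping of whether $x$ and $y$ lie in the always-source class $S_{\min}$, the always-sink class $\overline{S_{\max}}$, or the floating region $S_{\max} \setminus S_{\min}$, and a verification in each case that forming $(S_0 \cap S_{\max}) \cup S_{\min}$ (or its dual) does not accidentally pull $x$ back to the source side or push $y$ to the sink side. Once this is settled, both inclusions hold for every max-flow $f$, so $\leq_f \,=\, \leq_{\mincut} \,=\, \leq_{f'}$, completing the proof. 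I would also remark that both relations are a priori only preorders whose symmetric parts identify vertices inseparable by any min-cut, equivalently vertices lying in a common strongly connected component of $G_f$; the equivalence above shows these two equivalence relations coincide, so the induced object is a well-defined partial order on the common quotient.
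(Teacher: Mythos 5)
Your overall strategy coincides with the paper's: both directions hinge on identifying the predecessor set $S_0=\{v\in V(G): v\leq_f y\}$ (equivalently, the set of vertices reachable from $y$ in the residual graph) with the finest min-cut source side containing $y$, and your forward direction via complementary slackness is correct and complete. The problem sits exactly at the step you yourself flag as the crux, and it is a genuine gap as written: you assert that $S_0$ is ``tight'' in the sense that its cut capacity equals the max-flow value, but that equality requires the net flow out of $S_0$ to be $|f|_c$, which holds only once $s\in S_0$ and $t\notin S_0$ --- the very anchoring you then admit is not yet established. Saturation of the outgoing edges alone gives that the cut capacity equals the net outflow of $S_0$, which is $0$ or $-|f|_c$ in the unanchored cases, so the tightness claim is premature, and the subsequent lattice repair is left as unverified ``bookkeeping'' rather than carried out.

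The missing observation is that the anchoring is automatic, so the repair you propose is not needed at all. Since $f$ is a max-flow, $s\leq_f v\leq_f t$ for every $v\in V(G)$ (see \cref{remark: flow partial order}); in particular $s\leq_f y$, so $s\in S_0$ always. If $t\in S_0$, then $t\leq_f y$, and combining with $v\leq_f t$ for all $v$ and transitivity gives $S_0=V(G)$, contradicting $x\notin S_0$. Hence under your contrapositive hypothesis $S_0$ contains $s$, excludes $t$, and is closed under outgoing residual edges, so it is already a min-cut source side with $y\in S_0$ and $x\notin S_0$, witnessing $x\not\leq_{\mincut}y$. (Had the lattice combination genuinely been required, your case analysis would also have had to exclude $y\notin S_{\max}$, in which case $(S_0\cap S_{\max})\cup S_{\min}$ loses $y$; this possibility, too, is only ruled out by the dichotomy above.) With that one observation inserted, your argument closes and is in substance the paper's proof, which packages the same reasoning in terms of ``closures'' of the relation $R_f$ following Cottle and then shows $S_y=\{x:x\leq_f y\\}$ directly.

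Correction to the last formula: it should read $S_y=\{x: x\leq_f y\}$.
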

Our second main theorem concerns the application of the above theorem to the computation of the entanglement R\'enyi entropy of random tensor networks. More precisely, consider a flow network $(G, c, s, t)$, where we associate each edge $e$ with a $D_e\times D_e$ dimensional Hilbert space $\mathcal{H}_e\coloneqq\mathbb{C}^{D_e}\otimes\mathbb{C}^{D_e}$ (or equivalently we associate each vertex $x$ with a $\prod_{y:\, e_{xy}\in E(G)} D_{e_{xy}}$ dimensional Hilbert space $\mathcal{H}_x\coloneqq \bigotimes_{y:\, e_{xy}\in E(G)}\mathbb{C}^{D_{e_{xy}}}$) such that $D_e$ is parametrized by $D_e=D^{c_e}$. A random tensor network on this flow network is a (unnormalized) random state defined as 
\begin{equation*}
    |\psi_G\rangle \coloneqq \left\langle\bigotimes_{e\in E(G-\{s, t\})}\Omega_e\right|\left.\bigotimes_{x\in V(G-\{s, t\})}g_x\right\rangle \in \mathcal{H}_s \otimes \mathcal{H}_t
\end{equation*}
where $G-\{s, t\}$ is the subgraph of $G$ induced by cutting the source vertex and the sink vertex, $|g_x\rangle\in \mathcal{H}_x$ is a random Gaussian state at each vertex $x$, $|\Omega_e\rangle = \frac{1}{\sqrt{D_e}}\sum_{i=1}^{D_e}|ii\rangle\in\mathcal{H}_e$ is an EPR pair at each edge $e$. Then the entanglement R\'enyi entropy of the above random state with respect to the bipartition $\mathcal{H}_s$ and $\mathcal{H}_t$ is given in the following theorem.
\begin{theorem*}[Informal]
    Let $(G, c, s, t)$ be a flow network on which each edge $e$ is associated with a $D^{c_e}\times D^{c_e}$ dimensional Hilbert space and there is a corresponding random state $|\psi_G\rangle\in\mathcal{H}_s\otimes\mathcal{H}_t$. Then when $D$ is large, the entanglement R\'enyi entropy of order $n$ of the state $|\psi_G\rangle$ has a typical value given as 
    \begin{equation*}
        \mathsf{H}_n \approx \mincut(G)\log D - \frac{\log |\mathrm{Ord}(V(G), \mathcal{S}_n)|}{n-1} + \dots,
    \end{equation*}
    where $\mathrm{Ord}(V(G), \mathcal{S}_n)$ is the set of order morphisms from the poset $(V(G), \leq_{\mincut})$ to the poset $(\mathcal{S}_n, \leq_{\mathrm{nc}})$, where $\leq_{\mathrm{nc}}$ is the partial order on the symmetry group $\mathcal{S}_n$ introduced in free probability theory \cite[Notation 23.21]{nica_lectures_2006} (see the review in \cref{appendix:free-probability}), with fixed boundary conditions on $s$ and $t$, i.e.   
    \begin{multline*}
        \mathrm{Ord}(V(G), \mathcal{S}_n):=\{\boldsymbol{\beta}:V(G)\to\mathcal{S}_n|\\
        \boldsymbol{\beta}(s)=\mathrm{id},\  \boldsymbol{\beta}(t)=(1\cdots n), \text{ and}\\
        \boldsymbol{\beta}(x)\leq_{\mathrm{nc}}\boldsymbol{\beta}(y)\text{ iff }x\leq_{\operatorname{mincut}}y\}, 
    \end{multline*}
    and ``$\dots$'' represents the other sub-leading terms. Note that the above expression for the entanglement R\'enyi entropy becomes exact when $D\to\infty$. 
\end{theorem*}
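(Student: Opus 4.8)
The plan is to reduce the R\'enyi entropy to a permutation sum over the vertices via Gaussian integration, extract the leading order from the max-flow min-cut theorem, and identify the degenerate minimizers with order morphisms using the first main theorem. First I would write $\mathsf{H}_n=\tfrac{1}{1-n}\log\bigl(\mathrm{Tr}\,\rho_s^n/(\mathrm{Tr}\,\rho_s)^n\bigr)$ with $\rho_s=\mathrm{Tr}_t|\psi_G\rangle\langle\psi_G|$, and use that for large $D$ both $\mathrm{Tr}\,\rho_s^n$ and $\mathrm{Tr}\,\rho_s$ concentrate around their means, so that $\mathbb{E}\,\mathsf{H}_n\approx\tfrac{1}{1-n}\log\bigl(\mathbb{E}\,\mathrm{Tr}\,\rho_s^n/(\mathbb{E}\,\mathrm{Tr}\,\rho_s)^n\bigr)$. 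The moment $\mathbb{E}\,\mathrm{Tr}\,\rho_s^n$ is computed by the replica/swap trick, representing $\mathrm{Tr}\,\rho_s^n=\mathrm{Tr}\bigl[(|\psi_G\rangle\langle\psi_G|)^{\otimes n}(C_s\otimes\mathbb{I}_t)\bigr]$ with $C_s$ the cyclic shift on the $n$ copies of $\mathcal{H}_s$, and applying Wick's theorem to each Gaussian tensor $g_x$. Each vertex then carries a permutation $\pi_x\in\mathcal{S}_n$ to be summed over, with boundary values fixed by the cyclic/trace structure to $\pi_s=\mathrm{id}$ and $\pi_t=(1\cdots n)=:\gamma$, and each EPR edge $e_{xy}$ contributes $D_e^{\#\mathrm{cyc}(\pi_x^{-1}\pi_y)}=D^{c_e(n-d(\pi_x,\pi_y))}$, where $d(\sigma,\tau)=n-\#\mathrm{cyc}(\sigma^{-1}\tau)$ is the Cayley (transposition) distance on $\mathcal{S}_n$. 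After dividing by $(\mathbb{E}\,\mathrm{Tr}\,\rho_s)^n$ the constant baseline cancels and, up to $1+o(1)$ normalization factors, I obtain
\[
\frac{\mathbb{E}\,\mathrm{Tr}\,\rho_s^n}{(\mathbb{E}\,\mathrm{Tr}\,\rho_s)^n} \;=\; (1+o(1))\sum_{\substack{\pi:V(G)\to\mathcal{S}_n\\ \pi_s=\mathrm{id},\,\pi_t=\gamma}} D^{-\sum_{e_{xy}} c_e\, d(\pi_x,\pi_y)}.
\]

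Next I would extract the leading term by minimizing the energy $\mathcal{E}(\pi)=\sum_{e_{xy}}c_e\,d(\pi_x,\pi_y)$ over admissible configurations. The lower bound $\mathcal{E}(\pi)\ge(n-1)\,\mincut(G)$ follows from max-flow min-cut: decomposing a max-flow into paths $\{P\}$ from $s$ to $t$ with weights $f_P$, the triangle inequality for $d$ gives $\sum_{e\in P}d(\pi_x,\pi_y)\ge d(\pi_s,\pi_t)=d(\mathrm{id},\gamma)=n-1$, whence $\mathcal{E}(\pi)\ge\sum_e f_e\,d(\pi_x,\pi_y)\ge(n-1)\sum_P f_P=(n-1)\maxflow(G)=(n-1)\mincut(G)$. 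The matching upper bound comes from assigning $\mathrm{id}$ on the source side and $\gamma$ on the sink side of any min-cut. Since $\tfrac{1}{1-n}\log D^{-(n-1)\mincut(G)}=\mincut(G)\log D$, this reproduces the Ryu--Takayanagi leading term.

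Finally, the $D\to\infty$ limit is dominated by the configurations attaining $\mathcal{E}=(n-1)\mincut(G)$, whose number $N$ yields the correction $\tfrac{1}{1-n}\log N=-\tfrac{\log N}{n-1}$; it remains to prove $N=|\mathrm{Ord}(V(G),\mathcal{S}_n)|$. I would characterize the minimizers by tracking when the two inequalities above are saturated: equality forces $\pi_x=\pi_y$ on every edge not saturated by the max-flow, and forces the permutations to move geodesically from $\mathrm{id}$ to $\gamma$ along every flow-carrying path. The geodesics between $\mathrm{id}$ and $\gamma$ in the Cayley graph are exactly the maximal chains of non-crossing permutations, and a chain $\sigma_0\le_{\mathrm{nc}}\cdots\le_{\mathrm{nc}}\sigma_k$ is precisely one whose consecutive Cayley distances add up; hence a configuration is a minimizer iff it is monotone for $\le_{\mathrm{nc}}$ along the flow order $\le_f$ (and locally constant on its unsaturated part). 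Invoking the first main theorem to replace $\le_f$ by the intrinsic $\le_{\mincut}$, this is exactly the order-morphism condition defining $\mathrm{Ord}(V(G),\mathcal{S}_n)$, giving $N=|\mathrm{Ord}(V(G),\mathcal{S}_n)|$.

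The hard part will be this last step: pinning down the exact set of minimizers and matching it, cardinality on the nose, to the order-morphism set as defined, including both directions of the relation $\pi_x\le_{\mathrm{nc}}\pi_y\Leftrightarrow x\le_{\mincut}y$. This requires the fine structure of the non-crossing order $\le_{\mathrm{nc}}$ --- its gradedness by Cayley distance and the identification of geodesics with maximal non-crossing chains --- together with the canonical min-cut poset, and it is precisely here that the independence of $\le_f$ from the choice of max-flow, the content of the first theorem, is indispensable, since otherwise the count would appear to depend on an arbitrary flow. A secondary technical point is justifying the concentration that allows passing from $\mathbb{E}\log$ to $\log\mathbb{E}$ and controlling the $o(1)$ errors so that the stated expression becomes exact as $D\to\infty$.
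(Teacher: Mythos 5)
Your proposal follows essentially the same route as the paper: the Wick/replica computation reducing $\mathbb{E}\operatorname{Tr}\tilde\rho^n$ to a permutation spin model with boundary conditions $\mathrm{id}$ and $\gamma$, the lower bound $(n-1)\mincut(G)$ via a path decomposition of a max-flow and the Cayley-distance triangle inequality, the characterization of minimizers through saturation of both inequalities (constancy on unsaturated edges, geodesic non-crossing chains along flow-carrying paths), and the invocation of the max-flow/min-cut partial-order equivalence to replace $\leq_f$ by $\leq_{\mincut}$. This matches the paper's proof of \cref{theorem:order morphism} together with the surrounding concentration arguments, so the plan is correct and not a genuinely different approach.
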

Our final main theorem relates the size of the set of order morphisms with the moments of a graph-dependent measure, which is also the empirical eigenvalue distribution of a random tensor network. 
\begin{theorem*}[Informal]
    If the Hasse diagram which represents the poset $(V(G), \leq_{\mincut})$ is series-parallel (see \cref{sec:series-parallel}), then $|\mathrm{Ord}(V(G), \mathcal{S}_n)|$ is the $n$-th moment of the graph-dependent measure, from \cref{def:measure}, for the Hasse diagram. 
\end{theorem*}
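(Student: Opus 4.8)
The plan is to induct on the series-parallel decomposition of the Hasse diagram of $(V(G),\leq_{\mincut})$, matching the recursive construction of the measure in \cref{def:measure} against a parallel recursion for the order-morphism count. Recall from \cref{sec:series-parallel} that a two-terminal series-parallel graph with source $s$ and sink $t$ is generated from a single edge $s\to t$ by series composition $\seriescon$ (identifying the sink of one factor with the source of the next) and parallel composition $\parallelcon$ (identifying the two sources and the two sinks). Accordingly, I would show that $|\mathrm{Ord}(V(G),\mathcal{S}_n)|$ and the $n$-th moment $m_n(\mu_G)$ of the measure obey the same base case and the same two composition rules: a single edge has no interior vertices, forcing a unique morphism and the base measure $\delta_1$ with $m_n=1$; series composition corresponds to free multiplicative convolution $\boxtimes$, with an extra free Poisson (Marchenko--Pastur) factor $\pi$ accounting for the cut vertex that becomes interior; and parallel composition corresponds to classical multiplicative convolution, under which moments simply multiply. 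Establishing these three facts on both sides closes the induction.

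For the series step I would write $G=G_1\seriescon G_2$ with shared cut vertex $w$ and partition $\mathrm{Ord}(V(G),\mathcal{S}_n)$ according to the value $\boldsymbol\beta(w)=\sigma$, which lies in the geodesic interval $[\mathrm{id},(1\cdots n)]$ identified by the Biane bijection with the non-crossing partition lattice $NC(n)$. Each $\boldsymbol\beta$ restricts to an order morphism on each factor with $\sigma$ as its top value on $G_1$ and bottom value on $G_2$, and conversely a compatible pair glues back, so the count becomes $\sum_{\sigma\in NC(n)}A_1(\sigma)A_2(\sigma)$. The content of this step is that summation over the shared element, weighted by the interval structure of $NC(n)$, is exactly the combinatorial description of $\boxtimes$ together with the free-Poisson factor contributed by $w$: the non-crossing/geodesic compatibility at $w$ is the freeness condition, and the Kreweras-complement structure of the summand reproduces the $\boxtimes$ moment formula. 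I would make this precise through the multichain count in $NC(n)$, which is the Fuss-Catalan number and the $n$-th moment of $\pi^{\boxtimes\ell}$, the mechanism by which pure chains reproduce the free Bessel law in the special cases promised in the introduction.

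For the parallel step I would write $G=G_1\parallelcon G_2$, where the factors share only the fixed vertices $s$ and $t$ and have disjoint interiors. At the level of the random tensor network this composition tensors the two boundary Hilbert spaces and the two unnormalized reduced states, so $\mathrm{Tr}[\rho_G^{\,n}]=\mathrm{Tr}[\rho_{G_1}^{\,n}]\,\mathrm{Tr}[\rho_{G_2}^{\,n}]$ and the moments multiply, which is the classical multiplicative convolution I claim governs $\parallelcon$ in \cref{def:measure}. On the combinatorial side I would show that, once $\boldsymbol\beta(s)=\mathrm{id}$ and $\boldsymbol\beta(t)=(1\cdots n)$ are pinned, the order-morphism count factorizes as $|\mathrm{Ord}(V(G_1),\mathcal{S}_n)|\cdot|\mathrm{Ord}(V(G_2),\mathcal{S}_n)|$, so that both sides again obey the same rule.

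The main obstacle is this parallel step: the defining ``iff'' forces interior vertices of different branches, which are $\leq_{\mincut}$-incomparable, to receive $\leq_{\mathrm{nc}}$-incomparable images, and I must reconcile this cross-branch constraint with a clean product formula. I expect to resolve it by routing the count through the exact moment computation of the random tensor network carried out for the second main theorem, where the tensor-product factorization of $\rho_G$ makes the product structure manifest, and then transporting that identity back to the order-morphism count; the same computation also fixes the base measure and confirms the interpretation of the count as genuine moments. The series identification is comparatively safer, reducing to the multichain/Fuss-Catalan identity, where the only remaining care is to upgrade the per-moment match to an identity of measures, which I would secure using the multiplicativity of the $S$-transform under $\boxtimes$.
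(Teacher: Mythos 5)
Your overall strategy coincides with the paper's (the formal statement is \cref{theorem:relation to random tensor networks}): induct on the series-parallel decomposition, with $\delta_1$ as the base case, classical multiplicative convolution and a product of counts for $\parallelcon$, and a sum over the shared value $\boldsymbol\beta(w)\in NC(n)$ matching $\boxtimes$ with a Marchenko--Pastur factor for $\seriescon$. Two points need repair before the induction closes.

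First, the obstacle you flag in the parallel step is real, but your proposed resolution is not the right one. The ``iff'' in the informal definition of $\mathrm{Ord}$ is looseness of the informal statement; the set actually being counted (see \cref{theorem:order morphism}) consists of order \emph{morphisms}, i.e.\ only the implication $x\leq_{\mincut}y\implies\beta_x\leq_{\mathrm{nc}}\beta_y$ is imposed. With that reading the cross-branch constraint you worry about simply does not exist: interior vertices of the two parallel factors are mutually unconstrained, the morphism splits into two independent ones with pinned values at $s$ and $t$, and the count factorizes immediately. If one insisted on the literal ``iff'', the count would \emph{not} factorize and would not equal the moment. Your fallback --- deducing $\operatorname{Tr}[\rho_G^{\,n}]=\operatorname{Tr}[\rho_{G_1}^{\,n}]\operatorname{Tr}[\rho_{G_2}^{\,n}]$ from a tensor-product factorization of the random state --- is also not available in general, because the series-parallel structure lives on the Hasse diagram of the quotient graph, not on the underlying network, so the state need not factorize even when the Hasse diagram is a parallel composition. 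The paper's parallel step is purely combinatorial.

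Second, in the series step the identification of $\sum_{\sigma}A_1(\sigma)A_2(\sigma)$ with $\varphi_n(\mu_{G_1}\boxtimes\pi_1\boxtimes\mu_{G_2})$ needs the lemma that the conditioned counts factorize over cycles: $A_1(\sigma)=\prod_{c\text{ cycle of }\sigma}m_{|c|}(G_1)=\varphi_{\sigma}(\mu_{G_1})$ and $A_2(\sigma)=\varphi_{\sigma^{-1}\gamma}(\mu_{G_2})$, using $[\mathrm{id},\sigma]\cong\prod_c NC(|c|)$, combined with the cumulant identity $\kappa_{\beta}(\mu\boxtimes\pi_1)=\varphi_{\beta}(\mu)$. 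The multichain/Fuss--Catalan count you invoke only covers the case where both factors are chains; the paper carries out the general computation with the extended moment functionals $\varphi_{\alpha}$ indexed by permutations. Note also that the paper proves the weighted version with arbitrary relative dimensions $r_e$, which is where the rescalings $(r^{-1}\odot[\cdot])^{\boxplus r}$ in \cref{def:measure} enter; your unweighted count silently assumes $r_e=1$, which is consistent with the informal statement but should be said.
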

The more formal and general theorems are presented in the main content of this paper, where we consider a more general parametrization for the dimension of the Hilbert space on each edge, that is, we consider $D_e=r_e D^{c_e}$ for each edge $e$ in a flow network and $r_e\in\mathbb{R}^+$ is not necessarily equal to 1. 

\bigskip

This paper is organized as follows. In \cref{sec:partial-orders}, we introduce the notion of partial orders induced by flows and min-cuts in flow networks, and establish the equivalence between the partial order from any max-flow and the canonical partial order from min-cuts. In \cref{sec:random-tensor-networks}, we apply these results to random tensor networks, showing how the structure of min-cuts determines the leading and subleading contributions to the entanglement R\'enyi entropy, and we provide a combinatorial formula for the entropy in terms of order morphisms. \cref{sec:graph-dependent measures} discusses the special case of series-parallel posets, where the set of order morphisms admits a moment interpretation via a graph-dependent measure. Technical details and background on free probability theory are collected in \cref{appendix:free-probability}.

\section{Partial orders for vertices of a network}
\label{sec:partial-orders}
Network flow problems consist of the study of transporting commodities through a network. Naturally, as one can speak of whether one vertex precedes another in a network based on whether it is possible to have a flow from one vertex to another, a flow in the network induces a partial order relation among the vertices. This section is devoted to the mutual relations among these partial orders. We start this section by formally defining flow networks and relevant basic notations. A flow network $(G, c, s, t)$ is a directed graph $G$ with two distinguished vertices (the source $s$ and the sink $t$) and a capacity $c:E(G)\to\mathbb{R}_+$ associated with each of its edges. One can imagine these edges as tunnels through which some commodities can flow along their directions. We can extend this definition of flow networks and consider flow networks of undirected graphs where commodities are allowed to flow in both directions of the edges. A flow on a undirected graph $G=(V(G), E(G))$ with capacities $c_e$ associated with each of its edges $e$ corresponds to a flow on the corresponding \emph{bi-directed} graph $\overset{\rightleftharpoons}{G}=(V(\overset{\rightleftharpoons}{G}), E(\overset{\rightleftharpoons}{G}))$, where  
\begin{equation}
    \label{eqn:undirected and bidirected}
    V(G)=V(\overset{\rightleftharpoons}{G}) \quad\text{and}\quad e_{xy}\in E(G) \iff (x, y),\, (y, x)\in E(\overset{\rightleftharpoons}{G}),
\end{equation}
and the capacities of the edges in the bi-directed graph are symmetrized and taken to be the same as those in the undirected graph, i.e.
\begin{equation*}
    c_{e_{xy}} = c_{(x, y)} = c_{(y, x)}.
\end{equation*}
We can now proceed to give a formal definition for a flow as well as the optimization of flows --- the max-flow. 

\begin{figure}[t]
    \centering
    \includegraphics[scale=1.15]{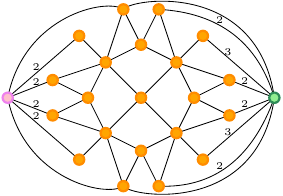}
    \caption{A flow network with the pink vertex as the source and the green vertex as the sink. The capacity for each edge is assumed to be 1 except for a few edges where we annotate their capacities by non-negative numbers.}
    \label{fig:example of a flow network}
\end{figure}

\subsection{Maximum flows}
Given a graph $G=(V(G), E(G))$ with two distinguished vertices $s$ and $t$ in $V(G)$, which we call the source and the sink respectively, and a set of edge capacities $\{c_e\}_{e\in E(G)}$, a flow is defined as follows:  
\begin{definition}
    \label{def:flows}
    A \emph{flow} from the source $s$ to the sink $t$ is a map $f:E(\overset{\rightleftharpoons}{G})\to\mathbb{R}$ such that 
    \begin{itemize}
        \item it is skew symmetric, i.e. $f(x, y)=-f(y, x)$;\footnote{A flow $f^{\prime}$ for directed graphs is often defined by replacing the skew symmetric property with the non-negativity (i.e. $f^{\prime}(x, y)\geq0$ for each directed edge $(x, y)$). Then, such a flow is related to the flow $f$ for undirected (bidirected) graphs in our definition by $f(x, y)=f^{\prime}(x, y)-f^{\prime}(y, x)$.}
        \item the flow through each edge is bounded by the capacity of the edge, i.e. $f(x, y)\leq c_{(x, y)}$;
        \item the total flow entering and exiting a vertex is conserved except at the source and sink, i.e. $\sum_{y:(x, y)\in E(\overset{\rightleftharpoons}{G})} f(x, y) = 0$ for $x\neq s, t$.
    \end{itemize}
    The value of a flow $|f|_c$ is the net flow out of the source or the net flow into the sink, that is, 
    \begin{equation*}
        |f|_c \coloneqq \sum_{y:(s, y)\in E(\overset{\rightleftharpoons}{G})} f(s, y) = \sum_{y:(y, t)\in E(\overset{\rightleftharpoons}{G})} f(y, t).
    \end{equation*}
    A flow is called a \emph{maximum flow (or max-flow)} if its value is maximal.
\end{definition}

\begin{figure}[htb]
    \centering
    \includegraphics[scale=1.15]{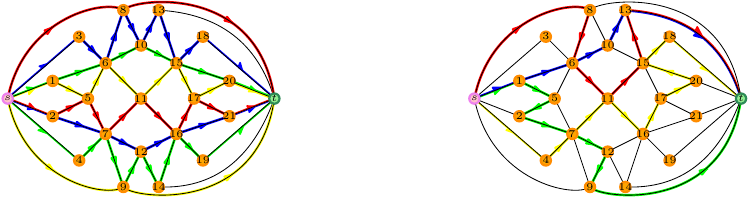}
    \caption{A max-flow (left) and a flow (right) on a network. The capacity for each edge is the same as in \cref{fig:example of a flow network} and each vertex here is labeled by a number. A highlighted path represents a transportation of one unit of flow from the source to the sink. The map $f$ evaluated at a particular edge is the number of highlighted paths going forward minus the number of highlighted paths going backward through that edge. Note that for the left, $|f|_c=8$ and for the right, $|f|_c=4$.}
    \label{fig:flow}
\end{figure}

Network flows are natural models of transportation through routes, channels, or other networks of the like in mathematics, and the max-flow is the optimal solution. In theoretical computer science, the problem of finding a max-flow is often formulated as a linear program and there is a large body of literature focusing on exploring efficient algorithms \cite{korte_combinatorial_2018} that solve this linear program. In graph theory, the network flow problem is also discussed extensively as a certain type of flows (such as circulations, i.e. a flow with $|f|_c=0$, and group-valued flows) reveals the connectivity of graphs \cite{diestel_graph_2017}. Here, we are interested in understanding flows from the partial orders for the vertices that can be extracted from flows. Naturally, there exists a partial order for a flow based on whether or not there are capacities left for a path between two vertices given that some capacities of the path have already been used by the flow, and it leads to the following definition. 
\begin{definition}
    \label{def:transitive closure}
    Given a flow $f$ in a network, we define the binary relation $R_f$ on the set of vertices as 
    \begin{equation*}
        x\,R_f\,y \quad \text{if and only if} \quad f(x, y)>-c_{(x, y)}.
    \end{equation*}
    We also introduce the partial order $\leq_f$ of vertices as the transitive closure of the relation $R_f$:
    \begin{equation*}
        x\leq_f y \quad \text{if and only if} \quad \exists z_1, z_2, \dots\,: x\,R_f\,z_1\,R_f\,z_2\cdots\,R_f\,y.
    \end{equation*}
\end{definition}
\begin{remark}
    \label{remark: flow partial order}
    Note that the binary relation $R_f$ is not symmetric, i.e. $x R_f y$ does not necessarily imply that $y R_f x$, and the relation here is the reverse of the relation in \cite{cottle_structure_1980}. The relation $\leq_f$ is transitive by definition and hence a partial order on the set of all equivalence classes of $V(G)$ with respect to the equivalence relation $=_{f}$, where $x=_fy$ if and only if $x\leq_fy$ and $y\leq_fx$ (for simplicity, we simply say $\leq_f$ is a partial order on $V(G)$ afterwards). It is worth noting that if $f$ is a max-flow, then $s\leq_f x\leq_f t$ for any $x\in V(G)$, since for any path from $s$ to $t$ via $x$, $f$ must be at least non-negative everywhere on the path for it to be a max-flow. 
\end{remark}
As an example, the partial orders for the flows in \cref{fig:flow} are respectively: 
\begin{itemize}
    \item left: 
    \begin{equation}
    \label{eqn: partial order from a max-flow}
        \left\{\begin{array}{l}
            s=_f3=_f4 \\
            s\leq_f1\leq_f5\leq_f6\leq_f8\leq_f10\leq_ft \\
            \qquad\qquad\qquad\quad6\leq_f11\leq_ft \\ 
            s\leq_f2\leq_f5\leq_f7\leq_f11\leq_ft \\ 
            \qquad\qquad\qquad\quad7\leq_f9\leq_f12\leq_ft \\ 
            t=_f13=_f\dots=_f21
        \end{array}\right.,
    \end{equation}
    \item right: 
    \begin{equation*}
        s=_f1=_f\dots=_f21=_ft
    \end{equation*}
\end{itemize}
We will return to the partial order defined here in \cref{sec:max-flow min-cut theorem}.

\subsection{Minimum cuts}
In this subsection, we discuss the dual picture of flows in linear programming --- the cuts. We say a set of edges is a cut of a graph if by removing these edges from the graph, it increases the number of connected components of the graph. Equivalently, any cut can be regarded as a bipartition of the set of vertices.
\begin{definition}
    \label{def:cuts}
    Given a flow network $(G, c, s, t)$, \emph{an $s-t$ cut} is a bipartition $(S, T)$ of $V(G)$ such that $s\in S$, $t\in T$. The set of edges corresponding to the bipartition which disconnects $s$ from $t$ is given as  
    \begin{equation}
        \label{eqn:cut set}
        (S, T) \coloneqq \{(x, y)\in E(\overset{\rightleftharpoons}{G})|x\in S\text{ and }y\in T\}. 
    \end{equation}
    The capacity of an $s-t$ cut $|(S, T)|_c$ is defined as the sum of the capacities of the edges included in the cut set, that is, 
    \begin{equation*}
        |(S, T)|_c\coloneqq\sum_{(x, y)\in (S, T)}c_{(x, y)}\in\mathbb{R}_{+}.
    \end{equation*}
    An $s-t$ cut is called a \emph{minimum $s-t$ cut (or min-cut)} if its capacity is minimal.
\end{definition}

\begin{figure}[htb]
    \centering
    \includegraphics[scale=1.15]{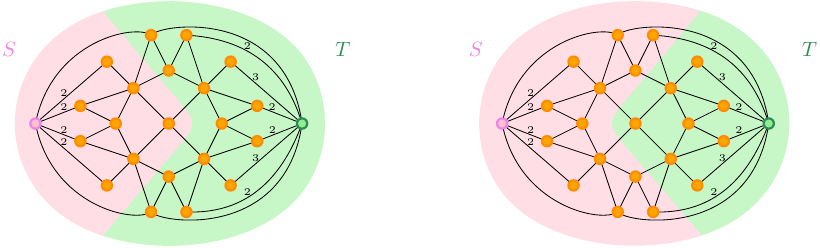}
    \caption{A min-cut (left) and a cut (right) for a network. Note that for the left, $|(S, T)|_c=8$ and for the right, $|(S, T)|_c=12$.}
    \label{fig:cut}
\end{figure}

Indeed, finding a min-cut for a flow network is the dual optimization problem of finding a max-flow for the same network. This duality has been firstly discovered by Ford and Fulkerson \cite{ford_flows_1974}, under the name of max-flow min-cut theory, which will be introduced in \cref{sec:max-flow min-cut theorem}. Here, similar to the discussion about flows, we are interested in understanding cuts from the partial orders for the vertices that can be extracted from mutual relations among the cuts. In particular, we are motivated by discovering the partial orders for the vertices implied by the duality between max-flows and min-cuts.  
\begin{definition}
    \label{def:canonical parital ordering}
    Given a flow network $(G, c, s, t)$ and a vertex $x\in V(G)$, the \emph{finest min-cut} that does not separate $x$ from the source is a bipartition $(S_x, \Bar{S}_x)$ with $S_x$ defined as 
    \begin{equation}
        \label{eqn:minimal min-cut containing x}
        S_x \coloneqq 
            \bigcap_{(S, T): \text{min-cut and $x\in S$}} S, 
    \end{equation}
    where interpret the empty intersection to be $V(G)$. The canonical partial order $\leq_{\mincut}$ on the set of vertices $V(G)$ is then defined as 
    \begin{equation*}
        x\leq_{\mincut} y \quad \text{if}\quad S_x \subseteq S_y.
    \end{equation*}
\end{definition}
\begin{remark}
    Note that $x\in S_x$ for all $x\in V(G)$ and it is trivial to show that $\leq_{\mincut}$ is indeed a partial order on the set of all equivalence classes of $V(G)$ with respect to the equivalence relation $=_{\mincut}$, where $x=_{\min}y$ if $x\leq_{\mincut}y$ and $y\leq_{\mincut}x$, since it directly follows from the properties of the inclusion $\subseteq$ (for simplicity, we simply say $\leq_{\mincut}$ is a partial order on $V(G)$ afterwards). Moreover, this partial order $\leq_{\mincut}$ reveals an interesting structure on the elements in $S_x$, which is summarized in the following lemma. 
\end{remark}

\begin{figure}[htb]
    \centering
    \includegraphics[scale=1.15]{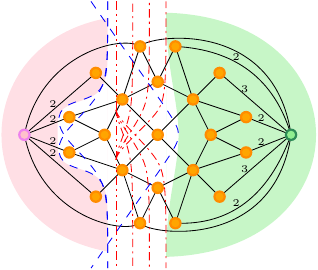}
    \caption{The degeneracy of min-cuts. The pink area corresponds to the minimal set of vertices $S$ such that $(S, T)$ is a min-cut and the green area corresponds to the minimal set of vertices $T$ (equivalently, the maximal set of vertices $S$) such that $(S, T)$ is a min-cut.  The other min-cuts which are related to the set $S_x$ in \cref{def:canonical parital ordering} are annotated by dashed curves.}
    \label{fig:set of min-cuts}
\end{figure}

\begin{lemma} 
\label{lemma:minimal min-cuts containing x}
The finest min-cut that does not separate $y$ from the source corresponds to the set of vertices that precede $y$, i.e. 
    \begin{equation*}
        S_y = \{x\in V(G)|x\leq_{\mincut} y\}.
    \end{equation*}
\end{lemma}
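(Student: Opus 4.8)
The plan is to prove the set equality $S_y = \{x \in V(G) \mid x \leq_{\mincut} y\}$ by establishing the two inclusions separately, working directly from the definitions. Since $\leq_{\mincut}$ is \emph{defined} by $x \leq_{\mincut} y \iff S_x \subseteq S_y$, the claim unwinds to showing that $S_y = \{x \in V(G) \mid S_x \subseteq S_y\}$; that is, membership of $x$ in the intersection $S_y = \bigcap_{(S,T)\,\text{min-cut},\, y\in S} S$ should be equivalent to the inclusion of the whole set $S_x$ in $S_y$. The entire argument is order- and set-theoretic: I would not expect to need any analytic or flow-based input beyond the definition of a min-cut.

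For the inclusion $\{x \mid x \leq_{\mincut} y\} \subseteq S_y$, I would use the observation recorded in the remark that $x \in S_x$ for every vertex $x$. Then if $x \leq_{\mincut} y$, i.e.\ $S_x \subseteq S_y$, we get $x \in S_x \subseteq S_y$ immediately. This direction is essentially a one-line consequence of $x \in S_x$ together with the definition of the order.

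The reverse inclusion $S_y \subseteq \{x \mid x \leq_{\mincut} y\}$ is the step that requires the real idea, which I would call a \emph{transfer between nested intersections}. Assume $x \in S_y$; I want to deduce $S_x \subseteq S_y$. Fix an arbitrary min-cut $(S,T)$ with $y \in S$. Because such a cut is one of the terms in the intersection defining $S_y$, we have $S_y \subseteq S$, and in particular $x \in S$. But now $(S,T)$ is a min-cut with $x \in S$, so it is also one of the terms in the intersection \emph{defining} $S_x$, whence $S_x \subseteq S$. Since $(S,T)$ was an arbitrary min-cut separating neither $x$ nor $y$ from the source, intersecting over all of them yields $S_x \subseteq \bigcap_{(S,T):\, y\in S} S = S_y$, and therefore $x \leq_{\mincut} y$. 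The care needed here is purely in correctly handling the universal quantifier over min-cuts containing $y$.

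The main obstacle I anticipate is conceptual rather than technical: the suggestive terminology ``finest min-cut that does not separate $x$ from the source'' tempts one to first prove that the set $(S_x, \bar S_x)$ is \emph{itself} a min-cut, which would require the submodularity (posimodularity) of the cut-capacity function and the closure of minimum cuts under intersection. I would flag that, while this lattice structure is true and justifies the name, it is in fact \emph{not} needed for the stated equality --- the double-inclusion argument above uses only that $S_x$ and $S_y$ are intersections of min-cut sides and that $x \in S_x$. I would keep the proof at this minimal level and defer any discussion of the min-cut lattice to wherever it is genuinely used.
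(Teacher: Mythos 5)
Your proof is correct, but it takes a genuinely more elementary route than the paper's for the key inclusion $S_y \subseteq \{x \mid x \leq_{\mincut} y\}$. The paper first invokes the closure of min-cuts under intersection (citing Cottle and Ford--Fulkerson) to conclude that $(S_y, \bar S_y)$ is itself a min-cut (or $S_y = V(G)$), and then uses that single cut as a witness in the intersection defining $S_x$, giving $S_x \subseteq S_y$. You instead observe that if $x \in S_y$ then every min-cut whose source side contains $y$ also contains $x$, so the family of cuts over which $S_x$ is intersected is a superfamily of the one defining $S_y$, and monotonicity of intersections yields $S_x \subseteq S_y$ directly; the empty-intersection convention is handled automatically by this argument. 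Your version needs no input beyond the definitions, which is a genuine simplification, and your closing remark is accurate: the lattice property is what justifies the terminology ``finest min-cut'' and is relied on elsewhere (for instance, the proof of \cref{theorem:equivalence of all of the partial orderings} treats $S_y$ as a closure equivalent to a min-cut), but it is not needed for the stated set equality. The easy inclusion $\{x \mid x \leq_{\mincut} y\} \subseteq S_y$ via $x \in S_x \subseteq S_y$ is identical in both proofs.
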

\begin{proof}
    We shall use the property that 
    \begin{center}
        if $(S_1, T_1)$ and $(S_2, T_2)$ are both min-cuts, then $(S_1\cap S_2, T_1\cup T_2)$ is also a min-cut, 
    \end{center}
    from \cite[Corollary 3]{cottle_structure_1980} or \cite[Corollary 5.3]{ford_flows_1974}. Then, with this property, we can find that for any $y\in V(G)$, either $(S_y, \Bar{S}_y)$ is a min-cut or $S_y= V(G)$. Suppose that now $x\in S_y$. Then $S_x\subseteq S_y$ since either $(S_y, \Bar{S}_y)$ satisfies the condition for $(S, T)$ required by \cref{eqn:minimal min-cut containing x} or $S_y=V(G)$ and hence $x\leq_{\mincut} y$ by definition. Conversely, suppose that $x\leq_{\mincut} y$. Then $S_x\subseteq S_y$ by definition and hence $x\in S_y$ since $x\in S_x\subseteq S_y$. Thus, we have showed that 
    \begin{equation*}
        x\in S_y \iff x\leq_{\mincut} y, 
    \end{equation*}
    which is equivalent to the statement in the lemma. 
\end{proof}

\subsection{The max-flow min-cut theorem}
\label{sec:max-flow min-cut theorem}
Here, we introduce one of the most important results in combinatorial optimization --- the max-flow min-cut theorem. This theorem is essential in proving our result concerning the max-flow min-cut duality for the partial orders for the vertices. 
\begin{theorem}[The max-flow min-cut theorem \cite{ford_flows_1974}]
    \label{theorem:max-flow min-cut theorem}
    For a flow network $(G, c, s, t)$, the value of a maximum flow from the source to the sink is equal to the capacity of a minimum $s-t$ cut. That is, 
    \begin{equation*}
        \maxflow(G)\coloneqq\max_{f}|f|_c = \min_{(S, T)}|(S, T)|_c\eqqcolon\mincut(G).
    \end{equation*}
\end{theorem}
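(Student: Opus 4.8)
The plan is to prove the two inequalities separately: weak duality $\maxflow(G)\le\mincut(G)$, which holds for every flow and every cut individually, and then strong duality, which exhibits a single cut whose capacity equals the value of a max-flow. Together these force equality.

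For weak duality I would fix an arbitrary flow $f$ and an arbitrary $s$--$t$ cut $(S,T)$ and compute the net flow across the cut. Starting from the double sum $\sum_{x\in S}\sum_{y:(x,y)\in E(\overset{\rightleftharpoons}{G})} f(x,y)$, the conservation law annihilates every vertex of $S$ except the source, so the double sum equals $|f|_c$. On the other hand, the skew-symmetry $f(x,y)=-f(y,x)$ makes the edges internal to $S$ cancel in pairs, so the same double sum reduces to $\sum_{x\in S,\,y\in T} f(x,y)$. Applying the capacity bound $f(x,y)\le c_{(x,y)}$ term by term then yields $|f|_c \le |(S,T)|_c$. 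Maximizing over $f$ on the left and minimizing over $(S,T)$ on the right gives $\maxflow(G)\le\mincut(G)$.

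For strong duality I would first note that a maximum flow exists: the feasible set is a compact convex polytope (closed under the linear conservation and skew-symmetry constraints and bounded by the capacities) and $f\mapsto|f|_c$ is continuous, so the supremum is attained. Fix such a max-flow $f$ and build the residual graph on $V(G)$ placing a directed edge $(x,y)$ whenever the residual capacity $c_{(x,y)}-f(x,y)$ is strictly positive. Let $S$ be the set of vertices reachable from $s$ along directed residual edges, and set $T=\bar S$. The crucial claim is that $t\notin S$: if $t$ were reachable, there would be a residual path from $s$ to $t$ along which one could push a positive amount of flow, producing a feasible flow of strictly larger value and contradicting maximality. Hence $(S,T)$ is a genuine $s$--$t$ cut. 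By construction every edge $(x,y)$ with $x\in S$, $y\in T$ has zero residual capacity, i.e. $f(x,y)=c_{(x,y)}$, so the reduction from the previous paragraph gives $|f|_c=\sum_{x\in S,\,y\in T} f(x,y)=\sum_{x\in S,\,y\in T} c_{(x,y)}=|(S,T)|_c$. Since $\mincut(G)\le|(S,T)|_c$ by definition of the min-cut, we obtain $\mincut(G)\le|f|_c=\maxflow(G)$, and combining with weak duality forces the desired equality.

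The main obstacle I expect is the augmentation step in the strong-duality argument, namely verifying that a residual $s$--$t$ path really yields a strictly larger feasible flow in this skew-symmetric formulation. One takes the augmentation amount to be the minimum residual capacity along the path and checks that adding it, with the sign dictated by skew-symmetry, preserves all three defining properties of a flow while strictly increasing $|f|_c$; the bookkeeping matters precisely because pushing flow forward on $(x,y)$ is the same operation as withdrawing flow on $(y,x)$. I would also stress that relying on the existence of a maximizer via compactness, rather than on termination of an augmenting-path algorithm, cleanly sidesteps the classical non-termination issues that arise for irrational capacities.
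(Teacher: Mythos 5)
Your proposal is correct and is precisely the classical Ford--Fulkerson argument that the paper invokes by citation rather than reproving: weak duality from the net-flow-across-a-cut computation, and strong duality from the residual-reachability cut of a maximum flow, with existence of the maximizer secured by compactness of the flow polytope. The care you take with the skew-symmetric bookkeeping in the augmentation step, and the choice to avoid algorithmic termination issues for irrational capacities, are exactly the right points to flag; nothing is missing.
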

\begin{remark}
    \label{remark: max-flow min-cut theorem}
    By the max-flow min-cut theorem, it is also easy to find that 
    \begin{equation*}
        \forall f,\, \forall (S, T), \quad |f|_c\leq\maxflow(G)=\mincut(G)\leq|(S, T)|_c. 
    \end{equation*}
    It then implies that $|f|_c=|(S, T)|_c$ if and only if $f$ is a max-flow and $(S, T)$ is a min-cut. Essentially, finding 
    \begin{equation*}
        \begin{split}
            \max|f|_c \quad &\text{subject to} \\
            &f:\text{ a flow},
        \end{split}
    \end{equation*}
    is a linear program. Its dual program is to find \cite{naves_notes_nodate}
    \begin{equation*}
        \begin{split}
            \min\textstyle{\sum_{e_{xy}\in E(G)}}c_{e_{xy}}d(x, y) \quad&\text{subject to} \\
            &d:\text{ a metric on $V(G)$}, \\
            &d(s, t)\geq1.
        \end{split}
    \end{equation*}
    This dual program has an optimum solution when $d$ is such that $d(x, y)=1$ for each pair of $x$ and $y$ separated by a $s-t$ cut set and $d(x, y)=0$ otherwise, and hence min-cuts appear naturally in the network flow optimization problem. One can learn more about the duality for linear programs in \cite{korte_combinatorial_2018} and its application to the network flow problem in \cite{naves_notes_nodate}. 
\end{remark}

While \cref{theorem:max-flow min-cut theorem} states that the numerical values of a max-flow and a min-cut are equal, we prove next that the partial order on the set of vertices induced by an arbitrary flow is identical to the one induced by the min-cuts. Indeed, the max-flow min-cut theorem implies the equivalence among these partial orders, and the following result can be understood as a refinement of the max-flow min-cut theorem. 
\begin{theorem}
    \label{theorem:equivalence of all of the partial orderings}
    Given a flow network $(G, c, s, t)$ and an arbitrary max-flow $f$, the partial order $\leq_f$ is equal to the canonical partial order $\leq_{\mincut}$. That is, 
    \begin{equation*}
        \forall f \text{ max-flow},\quad x\leq_f y \iff x\leq_{\mincut} y.
    \end{equation*}
\end{theorem}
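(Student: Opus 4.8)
The plan is to route both implications through one elementary observation about saturated edges at a minimum cut. For any $s$--$t$ cut $(S,T)$, flow conservation gives $|f|_c=\sum_{u\in S,\,w\in T}f(u,w)\le |(S,T)|_c$, and by \cref{theorem:max-flow min-cut theorem} together with \cref{remark: max-flow min-cut theorem} equality holds precisely when $f$ is a max-flow and $(S,T)$ is a min-cut. Hence, for a max-flow $f$ and a \emph{min-cut} $(S,T)$, every crossing edge is saturated: $f(u,w)=c_{(u,w)}$, equivalently $f(w,u)=-c_{(w,u)}$, for all $u\in S$, $w\in T$. In the language of \cref{def:transitive closure} this says that we never have $w\,R_f\,u$ with $w\in T$ and $u\in S$; equivalently, $T$ is closed under $R_f$ (any $R_f$-successor of a vertex of $T$ again lies in $T$). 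This single fact is what I would use on both sides.

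For the forward implication $x\le_f y\Rightarrow x\le_{\mincut}y$, by \cref{lemma:minimal min-cuts containing x} it suffices to show $x\in S_y$, i.e.\ that $x\in S$ for every min-cut $(S,T)$ with $y\in S$ (and this is vacuous if no such cut exists, since then $S_y=V(G)$). Fix such a cut and an $R_f$-chain $x\,R_f\,z_1\,R_f\cdots R_f\,y$ witnessing $x\le_f y$. Because its last vertex $y$ lies in $S$ while $T$ is $R_f$-closed, the chain can never step from $T$ into $S$; so if its first vertex $x$ were in $T$ the whole chain would stay in $T$, contradicting $y\in S$. Hence $x\in S$, and as the cut was arbitrary, $x\in S_y$.

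For the reverse implication I would argue by contraposition. Assume $x\not\le_f y$ and set $S:=\{v\in V(G):v\le_f y\}$ and $T:=\bar S$, so that $y\in S$ and $x\in T$. The plan is to verify that $(S,T)$ is a genuine min-cut separating $x$ from $y$. First, \cref{remark: flow partial order} gives $s\le_f v\le_f t$ for all $v$ whenever $f$ is a max-flow; thus $s\le_f y$ puts $s\in S$, whereas $t\in S$ would give $v\le_f t\le_f y$, hence $v\le_f y$ for \emph{all} $v$, contradicting $x\not\le_f y$, so $t\in T$. Next, every edge from $S$ to $T$ is saturated: if $u\le_f y$ but $f(u,w)<c_{(u,w)}$ for some $w\in T$, then $f(w,u)>-c_{(w,u)}$ yields $w\,R_f\,u$, and prepending this to an $R_f$-chain from $u$ to $y$ gives $w\le_f y$, i.e.\ $w\in S$, a contradiction. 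Saturation of all crossing edges forces $|(S,T)|_c=\sum_{u\in S,\,w\in T}f(u,w)=|f|_c=\mincut(G)$, so $(S,T)$ is a min-cut with $y\in S$ and $x\in T$; therefore $x\notin S_y$, and by \cref{lemma:minimal min-cuts containing x}, $x\not\le_{\mincut}y$. Combining the two implications yields the claim.

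I expect the reverse implication to be the crux. The delicate points are exhibiting an \emph{explicit} min-cut separating $x$ from $y$, namely the down-set $S=\{v:v\le_f y\}$, and checking that it is legitimately an $s$--$t$ cut (in particular $t\notin S$) and that it is saturated so that its capacity equals $|f|_c$. Both rely essentially on the global extremality of $s$ and $t$ in $\le_f$ recorded in \cref{remark: flow partial order}; without it the candidate set need not exclude $t$, and the construction would fail. The saturation check, which upgrades the purely combinatorial down-set into an actual minimum cut, is the step I would write out most carefully.
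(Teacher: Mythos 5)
Your proof is correct and follows essentially the same route as the paper's: both arguments rest on the observation that for a max-flow every edge crossing a min-cut is saturated (so each side of a min-cut is closed under $R_f$), and both identify the down-set $\{v: v\leq_f y\}$ as itself a min-cut side (or all of $V(G)$) that must coincide with $S_y$, concluding via \cref{lemma:minimal min-cuts containing x}. The paper packages this through the notion of a \emph{closure} and the closure/min-cut correspondence of \cite{cottle_structure_1980}, whereas you argue directly with edge saturation, but the mathematical content is the same.
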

\begin{proof}
    At first, we need to introduce the notion of the closure for a binary relation on the set of vertices $V(G)$ of a graph $G$. We say a subset of vertices $C$ is a \emph{closure} for a binary relation $R$ if and only if $y\in C$ and $x\,R\,y$ imply that $x\in C$. Note that, there exist multiple closures for one binary relation and $V(G)$ is always a closure. Then, according to \cite[Theorem 1 and Proposition 4]{cottle_structure_1980}, a closure $C$ for the partial order $\leq_f$ which contains the source but the sink is a set of vertices equivalent to a minimum $s-t$ cut. That is,
    \begin{equation*}
        \text{$C$ is a closure: $s\in C$ and $t\notin C$} \iff \text{$(C, \Bar{{C}})$ is a minimum $s-t$ cut,}
    \end{equation*}
    where the definition for $(C, \Bar{C})$ resembles that in \cref{eqn:cut set}. This is because, if $C$ is such a closure and we pick an edge $(x, y)\in(C, \Bar{C})$, then $y \nleq_f x$. Otherwise, $y\leq_f x$ would imply that $y\in C$ since $C$ is a closure and $x\in C$. It then contradicts with the fact that $y\in \Bar{C}$. Since $y \nleq_f x$ for each $(x, y)\in(C, \Bar{C})$, we can find that $f(y, x)=-c_{(y, x)}$ by \cref{def:transitive closure}. Then, by the conservation law and skew symmetry for flows, the value of the flow is given as  
    \begin{equation*}
        |f|_c=\sum_{(x, y)\in (C, \Bar{C})}f(x, y)=|(C, \Bar{C})|_{c}
    \end{equation*}
    By \cref{remark: max-flow min-cut theorem}, the above is true if and only if $(C, \Bar{C})$ is a min-cut. Conversely, suppose $(C, \Bar{C})$ is a min-cut. Then, by \cref{remark: max-flow min-cut theorem}, for all $(x, y)\in (C, \Bar{C})$, we have $f(x, y)=c_{(x, y)}$ and hence $y\nleq_f x$. As any path starting from a vertex in $C$ ending at a vertex not in $C$ must intersect with $(C, \Bar{C})$, this implies that for any $x^{\prime}\in C$ and any $y^{\prime}\notin C$, it is impossible to find $y^{\prime}\leq_f x^{\prime}$. In other words, $C$ is a set containing all $x$ such that $x\leq_f x^{\prime}$, for any $x^{\prime}\in C$. Hence it is a closure. In addition, it contains $s$ but $t$ since the min-cut $(C, \Bar{C})$ separates $s$ from $t$. 

    We then prove that $S_y = \{x\in V(G)|x\leq_f y\}$. Suppose that $x_1\in\{x\in V(G)|x\leq_f y\}$ and there is a vertex $x_2$ such that $x_2\leq_f x_1$. Then $x_2\leq_f x_1\leq_f y$. By transitivity, $x_2\leq_f y$ and hence $x_2\in\{x\in V(G)|x\leq_f y\}$. This shows that the set $\{x\in V(G)|x\leq_f y\}$ is a closure. This set also contains $s$ since $s\leq_f y$ always holds (recall \cref{remark: flow partial order}). Hence depending on whether $t$ is in or not in this set, it is either equivalent to a minimum $s-t$ cut, which we have just proved, or equal to $V(G)$ itself, since $V(G)=\{x\in V(G)|x\leq_f t\}\subseteq\{x\in V(G)|x\leq_f y\}\subseteq V(G)$ when $t\leq_f y$. In the case where $\{x\in V(G)|x\leq_f y\}\neq V(G)$, since $S_y$ is defined as the intersection of the sets $S$ such that each of them is equivalent a minimum $s-t$ cut and also contains $y$, the closure $\{x\in V(G)|x\leq_f y\}$ is of course one candidate and we must have  
    \begin{equation*}
        S_y\subseteq \{x\in V(G)|x\leq_f y\}.
    \end{equation*}
    Moreover, for any closure (min-cut) $C$, including $S_y$, which contains $y$, it must be true that $x_1\leq_f y$ implies that $x_1\in C$ by the definition of the closure. Hence, $x_1\leq_f y$ implies that $x_1\in S_y$ and 
    \begin{equation*}
        \{x\in V(G)|x\leq_f y\}\subseteq S_y.
    \end{equation*}
    We have proven that $S_y = \{x\in V(G)|x\leq_f y\}$. In the case where $\{x\in V(G)|x\leq_f y\}=V(G)$, we have $s\leq_f y$ and $t\leq_f y$. This implies that if $C$ is a closure containing $y$, then it must also contain $s$ and $t$. Such a closure cannot be equivalent to a minimum $s-t$ cut. Thus, there does not exist a min-cut $(S, T)$ such that $y\in S$. Then, $S_y=V(G)$ by \cref{def:canonical parital ordering} and hence $S_y=\{x\in V(G)|x\leq_f y\}$. 
    
    Finally, we prove the theorem. Suppose $x\leq_f y$. Then $x\in S_y=\{x\in V(G)|x\leq_f y\}$ and hence $x\leq_{\mincut} y$ since $S_y = \{x\in V(G)|x\leq_{\mincut} y\}$ by \cref{lemma:minimal min-cuts containing x}. The converse can be shown by the same logic. 
\end{proof}

\begin{corollary}
    The max-flow induced partial order $\leq_f$ is independent from the choice of the max-flow, i.e. 
    \begin{equation*}
        \forall f, f^{\prime} \text{ max-flows}, \quad x\leq_f y \iff x\leq_{f^{\prime}} y. 
    \end{equation*}
\end{corollary}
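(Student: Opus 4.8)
The plan is to leverage \cref{theorem:equivalence of all of the partial orderings} as a bridge between the two max-flows. The key observation is that the canonical partial order $\leq_{\mincut}$ is defined in \cref{def:canonical parital ordering} purely from the min-cut structure of the flow network $(G, c, s, t)$, without any reference to a particular flow. It is therefore a single fixed object attached to the network, manifestly independent of any choice of max-flow by construction. This flow-agnostic intermediary is what makes the corollary almost immediate.

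Given this, I would argue as follows. Let $f$ and $f^{\prime}$ be two arbitrary max-flows on $(G, c, s, t)$. Applying \cref{theorem:equivalence of all of the partial orderings} to $f$ gives $x \leq_f y \iff x \leq_{\mincut} y$ for all $x, y \in V(G)$; applying the very same theorem to $f^{\prime}$ gives $x \leq_{f^{\prime}} y \iff x \leq_{\mincut} y$. Chaining these two biconditionals through the common middle term $\leq_{\mincut}$ yields $x \leq_f y \iff x \leq_{\mincut} y \iff x \leq_{f^{\prime}} y$, which is precisely the asserted independence. No transitivity or closure arguments beyond those already established in the theorem are needed.

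There is no genuine obstacle at this point, since all the substantive work has been carried out in \cref{theorem:equivalence of all of the partial orderings}; the corollary merely records the consequence that the intermediate order appearing there does not depend on the flow. It is worth emphasizing, however, why this is the natural statement to isolate: a priori it is far from clear that two different max-flows induce the same reachability partial order, since distinct max-flows can route the commodity along genuinely different systems of paths (as illustrated on the left of \cref{fig:flow}). This is exactly the independence that is assumed but left unproven in \cite{fitter_max-flow_2024}, and the corollary closes that gap precisely by routing the comparison through a combinatorial object that references only the min-cuts and never the flow.
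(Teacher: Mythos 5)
Your proof is correct and is exactly the argument the paper intends: the corollary follows immediately by applying \cref{theorem:equivalence of all of the partial orderings} to each of $f$ and $f^{\prime}$ and chaining the two biconditionals through the flow-independent order $\leq_{\mincut}$. The paper leaves this step implicit (no separate proof is given), so there is nothing further to compare.
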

We will present an application of \cref{theorem:equivalence of all of the partial orderings} in \cref{sec:order morphisms}, where we are interested in computing a finite correction term to the entanglement R\'enyi entropy of random tensor networks. 

\subsection{Quotient graphs and Hasse diagrams}

\begin{figure}[b]
    \centering
    \includegraphics[scale=1.15]{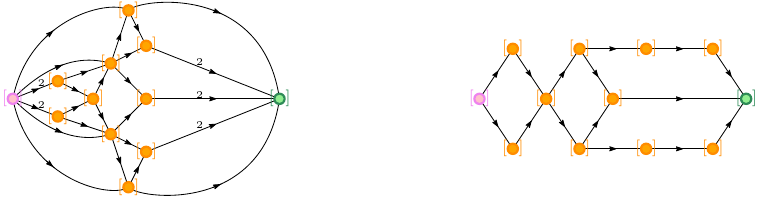}
    \caption{The quotient graph (left) and the Hasse diagram (right) for the flow network in \cref{fig:example of a flow network}.}
    \label{fig:Hasse diagram}
\end{figure}

In this subsection, we introduce two graphical representations for the partial order $\leq_{\mincut}$, namely the quotient graph and the Hasse diagram. The former is a \emph{directed acyclic graph} obtained from taking the quotient of a graph $G$ with respect to the equivalence relation $=_{\mincut}$ on the set of vertices $V(G)$ and the latter is a subgraph of the former, which is commonly used to represent partially ordered sets (posets). More precisely, for each edge $e_{xy}$ in $G$, we can replace it with a directed edge $(x, y)$ if $x\leq_{\mincut}y$. It then gives us a \emph{directed graph} possibly with some oriented cycles (also known as strongly connected components). Each oriented cycle of this directed graph indeed corresponds to a block of vertices at the same order. If we group vertices of equal orders as block vertices, we can then obtain a directed acyclic graph, which is the quotient graph of the directed graph with respect to the equivalence relation $=_{\mincut}$. Hence, for each flow network $(G, c, s, t)$, we can associate it with a directed acyclic graph $\vec{G}$, which we simply call the \emph{quotient graph}, defined as follows: 
\begin{equation}
    \label{eqn:quotient graph}
    \begin{split}
        V(\vec{G})&=\{[x]\,\boldsymbol{|}\,[x]\coloneqq \{y\in V(G)|x=_{\mincut}y\}\}\quad\text{and}\\
        E(\vec{G})&=\{([x], [y])\,\boldsymbol{|}\,\exists x\in[x], y\in[y]: e_{xy}\in E(G)\}.
    \end{split}
\end{equation}
The Hasse diagram that represents the poset $(V(G), \leq_{\mincut})$ is a subgraph of the quotient graph $\vec{G}$ such that there is no edge connecting $[x]$ and $[z]$ if there exists $[y]$ such that $[x]\leq_{\mincut} [y]\leq_{\mincut}[z]$. As an example, the quotient graph and the Hasse diagram associated to the flow network in \cref{fig:example of a flow network} are given in \cref{fig:Hasse diagram}, and it is not hard to see that the Hasse diagram in \cref{fig:Hasse diagram} is the same as the Hasse diagram that represents the poset $(V(G), \leq_f)$ when $f$ is a max-flow (see \cref{eqn: partial order from a max-flow}), which is consistent with \cref{theorem:equivalence of all of the partial orderings}.

A natural question one may ask is that what Hasse diagrams can be obtained from flow networks. Equivalently, we need to answer the question that given an arbitrary Hasse diagram, whether there exists a flow network such that its canonical partial order for the vertices $\leq_{\mincut}$ is represented by the given Hasse diagram. The answer is that we can always reconstruct a flow network from \emph{any Hasse diagram that has the least and greatest elements} following the steps below:
\begin{enumerate}
    \item Identify the least element of the given Hasse diagram as the source and the greatest element as the sink.
    \item Choose a set of paths from the source to the sink via vertices in a non-decreasing order such that for each edge in the given Hasse diagram, it belongs to some paths in the chosen set. 
    \item Assign each edge a capacity that is equal to the number of times it appears in the chosen set of paths.
\end{enumerate}
\begin{figure}[h]
    \centering
    \includegraphics[scale=1.15]{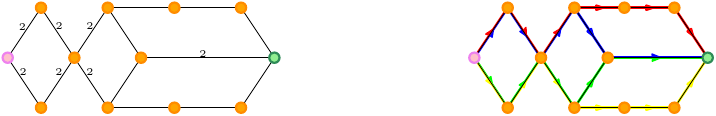}
    \caption{The reconstruction of a flow network from the Hasse diagram in \cref{fig:Hasse diagram}. Left: the flow network; Right: a max-flow on the network. }
    \label{fig:reconstruction of a flow network}
\end{figure}
Then the resulting flow network can host a flow that is equal to the capacity of edge everywhere in the network, and by \cref{def:transitive closure}, the partial order induced by this flow is represented precisely by the given Hasse diagram. Similar questions have also been studied in \cite{escalante_schnittverbande_1972} and \cite{meyer_lattices_1982}, where the authors have noticed that the (vertex or edge) cut sets for a flow network form a lattice\footnote{A lattice is a poset $(X, \leq)$ such that for each pair of elements $x$ and $y$ from $X$, there is a unique greatest lower bound $x\wedge y\in X$ such that $x\wedge y\leq x$ and $x\wedge y\leq y$ and a unique least upper bound $x\vee y\in X$ such that $x\leq x\vee y$ and $y\leq x\vee y$.} and the minimum (vertex or edge) cut sets form a distributive sublattice and they consider the question whether it is possible to reconstruct a flow network with its cut sets, minimum cut sets, or both consistent with a given lattice, a distributive lattice, or a lattice with a distributive sublattice. The reconstruction of a flow network is possible for a few cases using the method in \cite{escalante_schnittverbande_1972} while it remains an open question whether there exists a flow network with its minimum edge-cut sets forming an arbitrarily given distributive lattice. We emphasize that this open question is equivalent to ask whether the closure of the poset $(\{S_x\}_{x\in V(G)}, \subseteq)$ under finite unions and intersections is an arbitrary distributive lattice or not in our language, and the reconstruction proposed in this subsection might be a key to the open question. 

\subsection{Series-parallel graphs} 
\label{sec:series-parallel}
There are several properties about a poset that can be read from its Hasse diagram. For example, if the Hasse diagram is ``N''-free (see \cref{fig:lattice and N-free}), 
\begin{figure}[htb]
    \centering
    \includegraphics[scale=1.15]{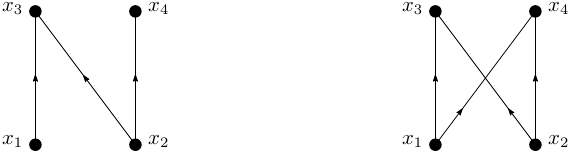}
    \caption{The ``N'' directed graph (left) and the ``double-N'' directed graph (right). They are respectively not allowed as a subgraph in the Hasse diagram of a poset with a series-parallel partial order and a lattice.}
    \label{fig:lattice and N-free}
\end{figure}
meaning that there does not exist four elements $x_1, x_2, x_3, x_4$ in the poset such that $x_1\leq x_3$, $x_2\leq x_4$, and $x_2\leq x_3$ while $x_1$ and $x_4$ are incomparable, then the represented poset has a series-parallel order, which is a partial order that can be constructed from atomic elements using two composition operations \cite{valdes_recognition_1979, goos_complete_1997}. If the Hasse diagram contains a ``double-N'' directed subgraph (see \cref{fig:lattice and N-free}), then the represented poset cannot be a lattice. What interests us in this subsection is the Hasse diagram for a \emph{finite} lattice that has a series-parallel partial order. In this case, the Hasse diagram can be generated recursively by two composition operations --- the series and parallel compositions \cite{eppstein_parallel_1992}. Here, we review the two operations (see \cite[Figure 6]{fitter_max-flow_2024} for visualization).

\begin{definition}[Series-parallel compositions]
    Given two graphs $H_1$ and, resp. $H_2$ each with two distinguished vertices, the source $s_1$ (resp. $s_2$) and the sink $t_1$ (resp. $t_2$), 
    \begin{itemize}
        \item the \emph{series composition} of $H_1$ and $H_2$, denoted by $H_1\seriescon H_2$, is obtained by taking the disjoint union of the two graphs under the condition that $t_1=s_2$; 
        \item the \emph{parallel composition} of $H_1$ and $H_2$, denoted by $H_1\parallelcon H_2$, is obtained by taking the disjoint union of the two graphs under the condition that $s_1=s_2$ and $t_1=t_2$.
    \end{itemize}
\end{definition}

\noindent The Hasse diagram thus obtained recursively from composing the atomic (directed) graphs $G_{\text{triv}}=(\{s, t\}, \{(s, t)\})$ is a series-parallel (directed) graph. Note that our notion of series-parallel graphs is different from that in \cite{valdes_recognition_1979, goos_complete_1997}, where they call the Hasse diagram for representing a series-parallel partial order a series-parallel digraph while they do not require that the poset is necessarily a lattice. The notion of series-parallel graphs that we consider here follows from \cite{eppstein_parallel_1992}; if the edges are directed, the graphs belong to a subset of the class of series-parallel digraphs in \cite{valdes_recognition_1979, goos_complete_1997}. With that said, any property that holds for the series-parallel digraphs in \cite{valdes_recognition_1979, goos_complete_1997} should also hold for the series-parallel graphs here. In particular, one can determine whether a graph is series-parallel in linear time and hence the computational complexity for questions on the series-parallel graphs is always simple. Later in \cref{sec:graph-dependent measures}, we will show that it is possible to define a measure on this simple class of series-parallel graphs, which has important physical applications. 

\section{Random tensor networks}
\label{sec:random-tensor-networks}

Tensor networks are an important variational class of wave functions to approximate ground states in many-body physics \cite{cirac_matrix_2021}. Random tensor networks correspond to probability distributions on the set of tensor networks; equivalently, they are ensembles of tensor networks. For example, they can be constructed from the contraction of independent random tensors placed on a network, and their entanglement properties are of particular interest as they are related to the min-cuts of the network \cite{hayden_holographic_2016, cheng_random_2022}. In this section, we will explore more about the connections to graphs and combinatorics for a random tensor network. 

\subsection{Definition}

The construction of a random tensor network involves a graph $\mathcal{G}=(\mathcal{V}, \mathcal{E})$, which represents the underlying tensor product structure of the random tensor we want to build. Here, we consider the graph $\mathcal{G}$ to be an open graph, meaning, a graph with internal edges (bulk edges) and external half edges (boundary edges). Then, $\mathcal{E} = \mathcal{E}_b \cup \mathcal{E}_\partial$ where $\mathcal{E}_b$ and $\mathcal{E}_{\partial}$ denote the set of bulk and boundary edges respectively and  
\begin{align*}
    \mathcal{E}_b &= \{e_{xy}\in \mathcal{E}|x, y\in V\}, \\ 
    \mathcal{E}_\partial &= \{e_{x}\in \mathcal{E}|x\in V\}.
\end{align*}
Such a graph can be used to define a random tensor network if we associate each external half edge $e_x$ with a $D_{e_x}$ dimensional complex vector space $\mathbb{C}^{D_{e_x}}$ and each internal edge $e_{xy}$ with a product of two $D_{e_{xy}}$ dimensional complex vector spaces $\mathbb{C}^{D_{e_{xy}}}\otimes\mathbb{C}^{D_{e_{xy}}}$ (an internal edge can be regarded as a pairing of two internal half edges). Then, for each vertex $x\in \mathcal{V}$, we use $\mathcal{H}_x$ to denote the product of the Hilbert spaces associated to those edges which are incident to the vertex, i.e. 
\begin{equation*}
    \mathcal{H}_x\coloneqq\left(\bigotimes_{y:e_{xy}\in \mathcal{E}_b}\mathbb{C}^{D_{e_{xy}}}\right)\otimes \left(\bigotimes_{e_{x}\in \mathcal{E}_{\partial}}\mathbb{C}^{D_{e_{x}}}\right)\qquad \forall x\in \mathcal{V},
\end{equation*}
and similarly, for each internal edge $e\in \mathcal{E}_b$, we have 
\begin{equation*}
    \mathcal{H}_{e}\coloneqq\mathbb{C}^{D_{e}}\otimes\mathbb{C}^{D_{e}}\qquad\quad \forall e\in \mathcal{E}_b.
\end{equation*}
We will pick an independent random complex standard Gaussian tensor $g_{i_1 \cdots i_{\deg(x)}}$ at each vertex $x$ and a normalized maximally entangled pair state $|\Omega_e\rangle$ from the Hilbert space at each internal edge $e$:  
\begin{align*}
    \forall x\in \mathcal{V}, &\quad x \to |g_x\rangle\in\mathcal{H}_x\quad\text{with $|g_x\rangle=\sum_{i_1 \cdots i_{\deg(x)}}g_{i_1 \cdots i_{\deg(x)}}|i_1\cdots i_{\deg(x)}\rangle_x$ (not normalized)}, \\
    \forall e\in \mathcal{E}_b, &\quad e \to |\Omega_e\rangle\in\mathcal{H}_e\quad\text{with $|\Omega_e\rangle=\frac{1}{\sqrt{D_e}}\sum_{i=1}^{D_e}|i\rangle_x|i\rangle_y$ ($e$ connects $x$ with $y$)},
\end{align*}
where $\{|i\rangle_x\}$ is an orthonormal basis we choose for the Hilbert space associated with each (external or internal) half edge incident with the vertex $x$. We recall that a complex standard Gaussian random variable is defined as $Z=\frac{1}{\sqrt{2}}(X+iY)$ where $X$ and $Y$ are independent standard real Gaussian (normal) random variables. A random complex standard Gaussian vector (or tensor) is a vector (resp. tensor) $\boldsymbol{Z}=(Z_i)_{i\in I}$ where the entries $\{Z_i\}_{i\in I}$ are independent and identically distributed random variables having a complex standard Gaussian distribution. We then define random tensor networks as follows. 

\begin{definition}
    \label{def:random tensor networks}
    Given an open graph $\mathcal{G}=(\mathcal{V}, \mathcal{E}_b\cup \mathcal{E}_\partial)$, a random tensor network is defined as the projection of the product state of independent random complex standard Gaussian vector assigned to each vertex to the maximally entangled pair state assigned to each bulk edge: 
    \begin{equation*}
        \lvert\psi_{\mathcal{G}}\rangle \coloneqq \left\langle\bigotimes_{e\in \mathcal{E}_b}\Omega_e\right\rvert\left.\bigotimes_{x\in \mathcal{V}}g_x\right\rangle\in \bigotimes_{e\in \mathcal{E}_\partial}\mathbb{C}^{D_{e}}\eqqcolon\mathcal{H}_{\partial}.
    \end{equation*}
    Note that $|\psi_{\mathcal{G}}\rangle$ is not normalized and it is a random state that lives in the Hilbert space at the boundary of the open graph. 
\end{definition}
\subsection{Bipartite entanglement}
One of the most important properties for a multiparty quantum state, which includes tensor networks, is its entanglement \cite{horodecki_quantum_2009, eisert_area_2010}. Here, we shall give a review focusing on the most well understood case --- bipartite entanglement. At first, we need to embed a random tensor network into a bipartite system. Suppose $\{\mathcal{E}_A, \mathcal{E}_B\}$ is a bipartition of the set of boundary edges $\mathcal{E}_\partial$. Accordingly we can decompose the Hilbert space on the boundary as the tensor product $\mathcal{H}_{\partial}=\mathcal{H}_{A}\otimes\mathcal{H}_{B}$ of the following two subspaces
\begin{equation*}
    \mathcal{H}_A\coloneqq\bigotimes_{e\in\mathcal{E}_A}\mathbb{C}^{D_{e}}\quad\text{and}\quad \mathcal{H}_B\coloneqq\bigotimes_{e\in\mathcal{E}_B}\mathbb{C}^{D_e}.
\end{equation*}

\begin{figure}[htb]
    \centering
    \includegraphics[scale=1.15]{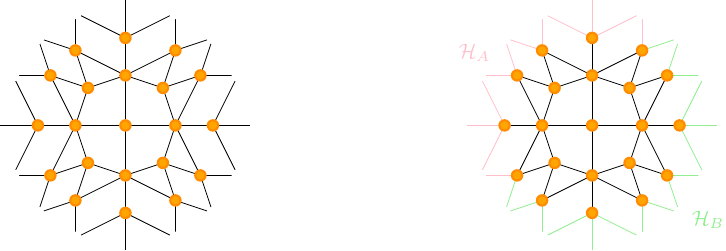}
    \caption{Random tensor networks and the bipartition of the boundary edges.}
    \label{fig:the bipartition of the boundary edges}
\end{figure}

\noindent From now on, the random tensor network $|\psi_{\mathcal{G}}\rangle$ will be regarded as an unnormalized pure state on a two-party system $A-B$. 
\begin{equation*}
    |\psi_{\mathcal{G}}\rangle \in \mathcal{H}_A\otimes\mathcal{H}_B.
\end{equation*}
A pure state $|\psi\rangle_{AB}$ on a two-party system is entangled if and only if it cannot be decomposed as a tensor product $|\psi\rangle_A\otimes|\psi\rangle_{B}$; equivalently, $|\psi\rangle_{AB}$ is entangled if and only if its reduced state on one subsystem (e.g. $\rho_A\coloneqq \operatorname{Tr}_{B}|\psi\rangle_{AB}\langle\psi|$) is mixed. Hence, to study bipartite entanglement of pure state, it suffices to examine the spectrum of the reduced density matrix, where the latter can be obtained by taking the partial trace on the subspace $\mathcal{H}_A$. Here, to study the entanglement property of $|\psi_{\mathcal{G}}\rangle$, we need to look at the spectrum of the following reduced matrix   
\begin{equation*}
    \rho_B = \operatorname{Tr}_A(|\psi_{\mathcal{G}}\rangle\langle\psi_{\mathcal{G}}|).
\end{equation*}
The above reduced matrix is random and depends, of course, on the choice of the bipartition of the set of boundary edges. To characterize the spectrum of $\rho_B$, the standard approach in random matrix theory is to compute the moments, $\mathbb{E}\operatorname{Tr}(\rho_B^n)$ for $n$ an integer. Indeed, it suffices to have a finite number ($=\dim\mathcal{H}_B$) of these moments computed. We will not repeat the calculations of these moments as there are many examples of calculations in \cite{hayden_holographic_2016, cheng_random_2022, fitter_max-flow_2024}. Nevertheless, we point out that, for computing the $n$-th moment, one needs to introduce a map from the set of vertices to the symmetry group on $n$ elements, 
\begin{equation}
    \label{eqn:spin model}
    \begin{split}
        \boldsymbol{\alpha}: \mathcal{V} &\to \mathcal{S}_n \\
        x &\mapsto \alpha_x, 
    \end{split}
\end{equation}
so that as a function of $\boldsymbol{\alpha}$, the $n$-th moment contains a sum over all of the possible choices for such an $\boldsymbol{\alpha}$. 
\begin{equation*}
    \mathbb{E}\operatorname{Tr}(\rho_B^n)=\frac{1}{\prod_{e\in\mathcal{E}_b}D_e^n}\sum_{\{\alpha_x\in\mathcal{S}_n\}_{x\in \mathcal{V}}}\prod_{e_x\in\mathcal{E}_A}D_{e_x}^{\#\alpha_x}\prod_{e_{xy}\in\mathcal{E}_b}D_{e_{xy}}^{\#(\alpha_x^{-1}\alpha_y)}\prod_{e_x\in\mathcal{E}_B}D_{e_x}^{\#(\alpha_x^{-1}\gamma)},
\end{equation*}
where $\#\alpha_x$ is the number of cycles of the permutation $\alpha_x$ and $\gamma=(1\cdots n)\in\mathcal{S}_n$. Note that for $n=1$, the formula above reads $\mathbb{E}\operatorname{Tr}\rho_B=\prod_{e\in\mathcal{E}_{\partial}}D_e$. We introduce now a quasi-normalized version of $\rho_B$, $\Tilde{\rho}_B\coloneqq\frac{\rho_B}{\prod_{e\in\mathcal{E}_{\partial}}D_e}$ (so that $\mathbb{E}\operatorname{Tr}\Tilde{\rho}_B=1$). The $n$-th moment of $\Tilde{\rho}_B$ is then 
\begin{equation}
    \label{eqn:n-th moment}
    \mathbb{E}\operatorname{Tr}(\Tilde{\rho}_B^n)=\sum_{\{\alpha_x\in\mathcal{S}_n\}_{x\in \mathcal{V}}}\prod_{e_x\in\mathcal{E}_A}D_{e_x}^{-|\alpha_x|}\prod_{e_{xy}\in\mathcal{E}_b}D_{e_{xy}}^{-|\alpha_x^{-1}\alpha_y|}\prod_{e_x\in\mathcal{E}_B}D_{e_x}^{-|\alpha_x^{-1}\gamma|},
\end{equation}
where $|\alpha_x|$ counts the minimal number of transpositions needed to write $\alpha_x$ as a product of transpositions and is alternatively defined as 
\begin{equation*}
    |\alpha_x| \coloneqq n - \#\alpha_x. 
\end{equation*}
Indeed, $(\alpha_x, \alpha_y)\mapsto|\alpha_x^{-1}\alpha_y|$ defines a metric on the set $\mathcal{S}_n$, which is known as \emph{Cayley distance} between the two permutations $\alpha_x$ and $\alpha_y$. As expected, Cayley distance is \cite[Proposition 23.20]{nica_lectures_2006}
\begin{itemize}
    \item non-negative, i.e. $|\alpha_x^{-1}\alpha_y|\geq 0$ with the inequality saturated if and only if $\alpha_x=\alpha_y$,
    \item symmetric, i.e. $|\alpha_{x}^{-1}\alpha_y|=|\alpha_y^{-1}\alpha_x|$,
    \item and it satisfies the triangle inequality, i.e. $|\alpha_{x_1}^{-1}\alpha_{x_2}| + |\alpha_{x_2}^{-1}\alpha_{x_3}|\geq|\alpha_{x_1}^{-1}\alpha_{x_3}|.$
\end{itemize}
\begin{figure}[htb]
    \centering
    \includegraphics[scale=1.15]{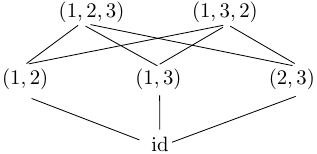}
    \caption{The Cayley graph of $\mathcal{S}_3$. Each edge represents a distance of 1.}
    \label{fig:Cayley graph}
\end{figure}

We emphasize that the quasi-normalized reduced matrix $\Tilde{\rho}_B$ should be regarded as a good approximation of the reduced density matrix $\rho_B/\operatorname{Tr}\rho_B$. This is because when $D_e$ is large for each edge, the distribution of $\operatorname{Tr}\Tilde{\rho}_B$ is centered at $1$ and has a small variance \cite{fitter_max-flow_2024}, and hence the $n$-th R\'enyi entropy, which characterizes the bipartite entanglement on the random tensor network, is approximately proportional to the logarithm of the trace of the $n$-th power of $\Tilde{\rho}_B$, that is,  
\begin{equation}
    \label{eqn:approximated entropy}
    \mathsf{H}_n \coloneqq \frac{1}{1-n}\log \operatorname{Tr}\left[\left(\frac{\rho_B}{\operatorname{Tr}\rho_B}\right)^n\right]\approx\frac{1}{1-n}\log\operatorname{Tr}(\Tilde{\rho}_B^n)
\end{equation}
More importantly, the $n$-th R\'enyi entropy also has a typical value when $D_e$ is large for each edge, which can be well approximated by the logarithm of the $n$-th moment of $\Tilde{\rho}_B$ \cite{hayden_holographic_2016, fitter_max-flow_2024}, that is, 
\begin{equation}
    \label{eqn:approximated average entropy}
        \mathbb{E}\mathsf{H}_n \approx \frac{1}{1-n}\mathbb{E}\log\operatorname{Tr}(\Tilde{\rho}_B^n) \approx \frac{1}{1-n}\log\mathbb{E}\operatorname{Tr}(\Tilde{\rho}_B^n).
\end{equation} 
It then suffices to compute the dominant contribution to the moments of $\Tilde{\rho}_B$ for determining the typical behavior of the bipartite entanglement on a random tensor network. 

\subsection{The asymptotic entanglement entropy}
In the case where the edge dimension $\{D_e\}$ approach infinity, the asymptotic behavior of the entanglement entropy of a random tensor network has been shown to be deeply connected to the min-cuts of the network \cite{hayden_holographic_2016, cheng_random_2022}. Here, we will introduce this relation as well as the asymptotics of the moments systematically. In particular, we consider the following scaling 
\begin{equation*}
    \forall e\in\mathcal{E}\quad D_e = r_e D^{c_e}\quad \text{with $r_e, c_e>0$ fixed and $D\to\infty$}, 
\end{equation*}
with two finite parameters $r_e$ and $c_e$ controlling the speed of $D_e$ approaching infinity as $D$ approaches infinity. For simplifying the discussions, we consider a slightly modified graph that is defined below. 
\begin{definition}
    \label{def:modified graph}
    Given an open graph $\mathcal{G}=(\mathcal{V}, \mathcal{E}_b\cup \mathcal{E}_\partial)$ and a bipartition $\{\mathcal{E}_A, \mathcal{E}_B\}$ of $\mathcal{E}_\partial$, the graph $G_{A-B}=(V, E)$ is obtained by inserting two vertices $A$ and $B$ into $\mathcal{G}$ and connecting $A$ (respectively, $B$) with each half edge in $\mathcal{E}_A$ (respectively, $\mathcal{E}_B$). The set of vertices and the set of edges of the graph $G_{A-B}$ are given as follows: 
    \begin{equation*}
        V\coloneqq \mathcal{V}\cup\{A, B\}, \quad E\coloneqq \mathcal{E}_b\cup \bigcup_{e_x\in \mathcal{E}_A}\{e_{Ax}\} \cup \bigcup_{e_x\in \mathcal{E}_B}\{e_{xB}\}.
    \end{equation*}
\end{definition}
\begin{remark}
    Note that the edges in $E$ are also bidirected where $(x, y)$ and $(y, x)$ are the two directions for $e_{xy}\in E$. The two appended vertices $A$ and $B$ should be understood as the source and sink of the flow network $G_{A-B}$ where the capacity for each edge $e\in E$ is $c_e=\lim_{D\to\infty}\log D_e/\log D$. As an example, $G_{A-B}$ for the network in \cref{fig:the bipartition of the boundary edges} is the same as in \cref{fig:example of a flow network}.
\end{remark}
In fact, there is a natural extension of $\boldsymbol{\alpha}$ from \cref{eqn:spin model} for the graph $G_{A-B}$. That is, there is a map $\boldsymbol{\beta}: V\to\mathcal{S}_n$ such that 
\begin{equation}
    \label{eqn:extended spin model}
    \boldsymbol{\beta}: x\mapsto \beta_x=\left\{\begin{array}{cl}
            \mathrm{id} & \text{if $x=A$} \\
            \alpha_x & \text{if $x\in\mathcal{V}$} \\
            \gamma & \text{if $x=B$}
        \end{array}\right., 
\end{equation}
where $\mathrm{id}$ is the identity in $\mathcal{S}_n$ and $\gamma=(1\cdots n)\in\mathcal{S}_n$. We can also view the extended map $\boldsymbol{\beta}$ as a spin model on the graph $G_{A-B}$ where at each vertex $x$, it is assigned with a spin variable $\beta_x$ and the Hamiltonian of this spin model is given by 
\begin{equation}
    \label{eqn:Hamiltonian}
    H_{G_{A-B}}(\boldsymbol{\beta})=\sum_{e_{xy}\in E}c_{e_{xy}}|\beta_x^{-1}\beta_y|, 
\end{equation}
which shall be understood as the sum of all of the interactions between two neighboring spins. It is not hard to see that the $n$-th moment $\mathbb{E}\operatorname{Tr}(\Tilde{\rho}_B^n)$ from \cref{eqn:n-th moment} is a sum over all of the possible energy configurations for the spin model $\boldsymbol{\beta}$ satisfying the boundary conditions $\beta_A=\mathrm{id}$ and $\beta_B=\gamma$. That is, 
\begin{equation*}
    \mathbb{E}\operatorname{Tr}(\Tilde{\rho}_B^n)=\sum_{\{\beta_x\in \mathcal{S}_n\}_{x\in\mathcal{V}}}D^{-H_{G_{A-B}}(\boldsymbol{\beta})}\prod_{e_{xy}\in E}r_{e_{xy}}^{-|\beta_x^{-1}\beta_y|}.
\end{equation*}
Thus, the random tensor network and its related problems defined on the graph $\mathcal{G}$ and $G_{A-B}$ are essentially equivalent and we shall not always clearly distinguish the two pictures. Nevertheless, the cuts concerned with the asymptotic behaviors of the moments have a much simpler interpretation as a bipartition of vertices in the graph $G_{A-B}$ as from \cref{def:cuts}. Among these cuts, the minimum $A-B$ cuts give a tight lower bound on the Hamiltonian of interest, that is, 
\begin{equation}
    \label{eqn:lower bound on the Hamiltonian}
    \min_{\boldsymbol{\beta}} H_{G_{A-B}}(\boldsymbol{\beta}) = (n-1)\mincut(G_{A-B}),
\end{equation}
and subsequently, we can find the asymptotic expansion of the $n$-th moment as 
\begin{multline}
    \label{eqn:asymptotic expansion of moments}
    \mathbb{E}\operatorname{Tr}(\Tilde{\rho}_B^n) = D^{-\min_{\boldsymbol{\beta}}H_{G_{A-B}}(\boldsymbol{\beta})}(m_n + O(D^{-1})), \\
    \text{where } m_n \coloneqq \lim_{D\to\infty}D^{-\mincut(G_{A-B})}\mathbb{E}\operatorname{Tr}\left[\left(D^{\mincut(G_{A-B})}\Tilde{\rho}_B\right)^n\right]. 
\end{multline}
Combined with \crefrange{eqn:approximated entropy}{eqn:approximated average entropy} where the approximation becomes exact in the limit $D\to\infty$ \cite{fitter_max-flow_2024}, it follows easily that the entanglement entropy is proportional to the minimal capacity of $A-B$ cuts asymptotically. 
\begin{equation*}
    \mathsf{H}_n \xrightarrow{D\to\infty} \log D\cdot\mincut(G_{A-B}),
\end{equation*}
which is one of the important results in \cite{hayden_holographic_2016}. Moreover, the $n$-th properly rescaled moment $m_n$, which gives a finite correction to the entanglement entropy, also depends nontrivially on the min-cuts, and it will be discussed in the next subsection.

\subsection{Order morphisms}
\label{sec:order morphisms}
Indeed, the family of min-cuts contains information more than the lower bound for the Hamiltonian of the spin model on $G_{A-B}$. One application of the canonical partial order defined from the min-cuts is that it gives the number of energy configurations such that the Hamiltonian is minimized and hence the $n$-th properly rescaled moment $m_n$. To be brief, the Hamiltonian of the spin model on $G_{A-B}$ is minimized only if the map $\boldsymbol{\beta}$ sends the partial order $\leq_{\mincut}$ on $V$ to the partial order $\leq_{\mathrm{nc}}$ on $\mathcal{S}_n$, where $\alpha_1\leq_{\mathrm{nc}}\alpha_2$ means that the cycles of $\alpha_1$ form a noncrossing partition for the cycles of $\alpha_2$ \cite[Proposition 23.23]{nica_lectures_2006} (also see \cref{appendix:free-probability} and \cref{fig:crossing partition and non-crossing partition}).

\begin{figure}[htb]
    \centering
    \includegraphics[scale=1.15]{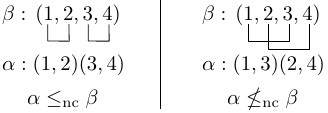}
    \caption{Crossing partition and non-crossing partition.}
    \label{fig:crossing partition and non-crossing partition}
\end{figure}
\noindent The following theorem can then be proved using \cref{theorem:equivalence of all of the partial orderings}. 

\begin{theorem}
    \label{theorem:order morphism}
    The Hamiltonian $H_{G_{A-B}}(\boldsymbol{\beta})$ from \cref{eqn:Hamiltonian} is minimized if and only if $\boldsymbol{\beta}$ is an element of the following set 
    \begin{equation*}
        \mathrm{Ord}(V, \mathcal{S}_n) \coloneqq \{\boldsymbol{\beta}:(V, \leq_{\mincut})\to(\mathcal{S}_n, \leq_{\mathrm{nc}})|\beta_A=\mathrm{id}\ \text{and\ } \beta_B=\gamma=(1\cdots n)\},
    \end{equation*}
    that is, $\boldsymbol{\beta}$ is an order morphism from the poset $(V, \leq_{\mincut})$ to the poset $(\mathcal{S}_n, \leq_{\mathrm{nc}})$, i.e. $x\leq_{\mincut} y\implies \beta_x\leq_{\mathrm{nc}}\beta_y$, with the boundary conditions $\beta_A=\mathrm{id}$ and $\beta_B=\gamma$. Let $w_{G_{A-B}}(\boldsymbol{\beta})$ be the weight of a map $\boldsymbol{\beta}$ defined as follows
    \begin{equation*}
        w_{G_{A-B}}(\boldsymbol{\beta}) \coloneqq \prod_{e_{xy}\in E}r_{e_{xy}}^{-|\beta_x^{-1}\beta_y|}
    \end{equation*}
    Then the $n$-th properly rescaled moment $m_n$ from \cref{eqn:asymptotic expansion of moments} is given by 
    \begin{equation*}
        m_n=\sum_{\boldsymbol{\beta}\in\mathrm{Ord}(V, \mathcal{S}_n)}w_{G_{A-B}}(\boldsymbol{\beta}).
    \end{equation*} 
\end{theorem}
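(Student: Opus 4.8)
The plan is to prove the theorem in two halves: first, to give an exact combinatorial characterization of the minimizers of the Hamiltonian as the order morphisms in $\mathrm{Ord}(V,\mathcal{S}_n)$; and second, to read off the moment formula directly from the asymptotic expansion in \cref{eqn:asymptotic expansion of moments}. For the characterization I would revisit the proof of the lower bound \cref{eqn:lower bound on the Hamiltonian}: fixing a max-flow $f$ on $G_{A-B}$ and a conformal, acyclic flow decomposition into $\mincut(G_{A-B})$ unit-value $A$--$B$ paths, one has
\begin{equation*}
    H_{G_{A-B}}(\boldsymbol{\beta}) = \sum_{e_{xy}\in E} c_{e_{xy}}|\beta_x^{-1}\beta_y| \geq \sum_{\text{paths } P}\sum_{(x,y)\in P}|\beta_x^{-1}\beta_y| \geq \sum_{\text{paths } P}|\beta_A^{-1}\beta_B| = (n-1)\mincut(G_{A-B}),
\end{equation*}
where the first inequality uses $|f(x,y)|\leq c_{e_{xy}}$ together with the symmetry of the Cayley distance, and the second is the triangle inequality applied along each path from $\beta_A=\mathrm{id}$ to $\beta_B=\gamma$. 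The whole argument then rests on extracting the equality conditions: the first inequality is tight iff every edge with slack ($|f(x,y)|<c_{e_{xy}}$) satisfies $\beta_x=\beta_y$, and the second is tight iff the permutations along each flow-carrying path form a geodesic from $\mathrm{id}$ to $\gamma$ in the Cayley graph.

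For the forward implication (minimizer $\Rightarrow$ order morphism) I would translate these two equality conditions into a per-edge statement. Using the geodesic characterization of $\leq_{\mathrm{nc}}$ recalled in \cref{appendix:free-probability} --- namely that for $\sigma,\tau\leq_{\mathrm{nc}}\gamma$ one has $\sigma\leq_{\mathrm{nc}}\tau$ iff $|\sigma|+|\sigma^{-1}\tau|=|\tau|$ --- tightness of the second inequality forces $\beta_x\leq_{\mathrm{nc}}\beta_y$ for every edge carrying strictly positive forward flow (such an edge lies on some path of the decomposition, and consecutive vertices of a geodesic satisfy the additivity identity). Tightness of the first condition gives $\beta_x=\beta_y$, hence trivially $\beta_x\leq_{\mathrm{nc}}\beta_y$, on every slack edge. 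Consequently any edge $e_{xy}$ with $f(x,y)>-c_{e_{xy}}$ has $\beta_x\leq_{\mathrm{nc}}\beta_y$. Now given $x\leq_{\mincut}y$, \cref{theorem:equivalence of all of the partial orderings} supplies a flow-admissible path $x=w_0,w_1,\dots,w_\ell=y$ with $f(w_i,w_{i+1})>-c_{e_{w_iw_{i+1}}}$ at every step; applying the per-edge statement and the transitivity of $\leq_{\mathrm{nc}}$ yields $\beta_x\leq_{\mathrm{nc}}\beta_y$. This is exactly the order morphism property, and the boundary conditions $\beta_A=\mathrm{id}$, $\beta_B=\gamma$ hold by construction.

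For the converse (order morphism $\Rightarrow$ minimizer) I would show that both equality conditions hold automatically. First, any edge $e_{xy}$ of positive capacity joins $\leq_{\mincut}$-comparable vertices: a max-flow has $f(x,y)>-c_{e_{xy}}$ or $f(y,x)>-c_{e_{xy}}$, so by \cref{theorem:equivalence of all of the partial orderings} either $x\leq_{\mincut}y$ or $y\leq_{\mincut}x$. If the endpoints are strictly comparable, say $x<_{\mincut}y$, then $(S_x,\bar{S}_x)$ is a genuine min-cut (by \cref{lemma:minimal min-cuts containing x}, $S_x\neq V(G)$ as $x\neq_{\mincut}B$) crossed forward by $e_{xy}$, so $e_{xy}$ is saturated in every max-flow and contributes no slack; if they are equivalent, the order morphism property forces $\beta_x=\beta_y$ by antisymmetry of $\leq_{\mathrm{nc}}$, so $|\beta_x^{-1}\beta_y|=0$. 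In either case the slack term $(c_{e_{xy}}-|f(x,y)|)|\beta_x^{-1}\beta_y|$ vanishes, making the first inequality an equality. Moreover, along each flow-carrying path the vertices are $\leq_{\mincut}$-increasing, so the order morphism sends them to a $\leq_{\mathrm{nc}}$-chain $\mathrm{id}=\beta_{w_0}\leq_{\mathrm{nc}}\cdots\leq_{\mathrm{nc}}\beta_{w_m}=\gamma$ whose successive Cayley distances telescope to $|\gamma|=n-1$; this makes the second inequality an equality. Hence $H_{G_{A-B}}(\boldsymbol{\beta})=(n-1)\mincut(G_{A-B})$ and $\boldsymbol{\beta}$ is a minimizer.

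Finally, the moment formula drops out of \cref{eqn:asymptotic expansion of moments}: writing $\mathbb{E}\operatorname{Tr}(\Tilde{\rho}_B^n)=\sum_{\boldsymbol{\beta}}D^{-H_{G_{A-B}}(\boldsymbol{\beta})}w_{G_{A-B}}(\boldsymbol{\beta})$ and multiplying by $D^{(n-1)\mincut(G_{A-B})}=D^{\min_{\boldsymbol{\beta}}H_{G_{A-B}}(\boldsymbol{\beta})}$, the terms with $H_{G_{A-B}}(\boldsymbol{\beta})>\min_{\boldsymbol{\beta}}H_{G_{A-B}}(\boldsymbol{\beta})$ vanish as $D\to\infty$, leaving $m_n=\sum_{\boldsymbol{\beta}\in\mathrm{Ord}(V,\mathcal{S}_n)}w_{G_{A-B}}(\boldsymbol{\beta})$ by the characterization above. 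I expect the main obstacle to be the forward direction: one must argue that the two independently derived tightness conditions together encode precisely the order morphism relation (no more, no less), and the essential leverage is \cref{theorem:equivalence of all of the partial orderings}, which guarantees that the comparability witnessed by flow-admissible paths is exactly $\leq_{\mincut}$ and hence independent of the chosen max-flow. A secondary technical point is ensuring a conformal acyclic flow decomposition, so that the total path weight on each edge equals $|f(x,y)|$, which is what makes the first tightness condition clean.
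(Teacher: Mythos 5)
Your proposal is correct and follows essentially the same route as the paper's proof: decompose a max-flow into $A$--$B$ paths, split the gap $H_{G_{A-B}}(\boldsymbol{\beta})-(n-1)\mincut(G_{A-B})$ into a non-negative slack term and a non-negative triangle-inequality term, characterize when both vanish, and invoke \cref{theorem:equivalence of all of the partial orderings} to replace $\leq_f$ by $\leq_{\mincut}$, with the moment formula then read off from \cref{eqn:asymptotic expansion of moments}. The only real difference is that you use a conformal weighted path decomposition of the flow where the paper packs edge-disjoint paths weighted by their bottleneck capacities; your variant is if anything the more robust choice for non-integer capacities, and your converse direction (saturation of every edge joining strictly $\leq_{\mincut}$-comparable vertices, telescoping of Cayley distances along $\leq_{\mathrm{nc}}$-chains) is spelled out more explicitly than in the paper.
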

\begin{remark}
    The theorem gives a combinatorial interpretation for the finite correction to the entanglement entropy $m_n$. Meanwhile, the set $\{m_n\}_{n\in\mathbb{N}}$ is also related to the moments of a graph-dependent measure, which is discussed in the next section. 
\end{remark}
\begin{proof}
    Here, we consider $P$ as a path from $A$ to $B$ that one can find from the graph $G_{A-B}$, 
    \begin{equation*}
        P\coloneqq\{(x_i, x_{i+1})\in E\}_{i\in\{0, \cdots, |P|\},\  x_0=A,\ \text{and}\ x_{|P|}=B}, 
    \end{equation*}
    and $\mathcal{P}_{A-B}$ as a set of edge-disjoint paths from $A$ to $B$,   
    \begin{equation*}
        \mathcal{P}_{A-B}\coloneqq\{P|\forall P_1, P_2\in\mathcal{P}_{A-B}\ P_1\cap P_2=\emptyset\}.
    \end{equation*}
    The optimal amount of flow on a set of edge-disjoint paths from $A$ to $B$ is then defined as  
    \begin{equation*}
        |\mathcal{P}_{A-B}|_c \coloneqq \sum_{P\in\mathcal{P}_{A-B}}\min_{e\in P}c_e.
    \end{equation*}
    This definition is reasonable since we can always find a flow $f$ such that $|f|_c=|\mathcal{P}_{A-B}|_c$ by letting $f(x_i, x_{i+1})=\min_{e\in P}c_e$ for all $(x_i, x_{i+1})\in P\in\mathcal{P}_{A-B}$ and $f(x, y)=0$ for the other edges. Then if $|\mathcal{P}_{A-B}|_c$ is maximized, $\mathcal{P}_{A-B}$ corresponds to a max-flow $f$, and \cref{eqn:lower bound on the Hamiltonian} implies that
    \begin{equation*}
        \min_{\boldsymbol{\beta}} H_{G_{A-B}}(\boldsymbol{\beta})=(n-1)\max_{\mathcal{P}_{A-B}}|\mathcal{P}_{A-B}|_c=|\mathrm{id}^{-1}\gamma|\max_{\mathcal{P}_{A-B}}|\mathcal{P}_{A-B}|_c.
    \end{equation*}
    Now suppose that $\boldsymbol{\beta}$ is an energy configuration such that the Hamiltonian is minimized. Then for a set of edge-disjoint paths $\mathcal{P}_{A-B}$ with a maximal optimal amount of flow, we have 
    \begin{equation}
        \label{eqn:minimization of the Hamiltonian}
        \begin{split}
            &H_{G_{A-B}}(\boldsymbol{\beta})-|\mathrm{id}^{-1}\gamma|\cdot|\mathcal{P}_{A-B}|_c\\
            =&\underbrace{H_{G_{A-B}}(\boldsymbol{\beta})-\sum_{P\in\mathcal{P}_{A-B}}\min_{e\in P}c_e\sum_{(x, y)\in P}|\beta_x^{-1}\beta_y|}_{\Delta_1}+\underbrace{\sum_{P\in\mathcal{P}_{A-B}}\min_{e\in P}c_e\left(\sum_{(x, y)\in P}|\beta_x^{-1}\beta_y|-|\mathrm{id}^{-1}\gamma|\right)}_{\Delta_2} \\
            =&\ 0
        \end{split}
    \end{equation}
    Note that $\Delta_1$ is non-negative since it either contains the non-negative term $c_{e_{xy}}|\beta_x^{-1}\beta_y|$ for each edge $(x, y)$ that does not belong to any of the edge-disjoint paths from $A$ to $B$ or the non-negative term $(c_{e_{xy}}-\min_{e\in P}c_e)|\beta_x^{-1}\beta_y|$ for each edge $(x, y)$ that belongs to some edge-disjoint paths $P$ from $A$ to $B$ (recall \crefrange{eqn:extended spin model}{eqn:Hamiltonian}) and $\Delta_2$ is non-negative since the triangle inequality,  
    \begin{equation*}
        |\alpha_1^{-1}\alpha_2|+|\alpha_2^{-1}\alpha_3|\geq|\alpha_1^{-1}\alpha_3|,
    \end{equation*}
    holds for Cayley distance. Then \cref{eqn:minimization of the Hamiltonian} is true if and only if $\Delta_1$ and $\Delta_2$ vanishes simultaneously, which is then equivalent to saying that 
    \begin{itemize}
        \item $\beta_x=\beta_y$ for each edge $(x, y)$ that does not belong to any of the edge-disjoint paths from $A$ to $B$ (hence Cayley distance $|\beta_x^{-1}\beta_y|$ vanishes),
        \item $\beta_x=\beta_y$ for each edge $(x, y)$ that belongs to some edge-disjoint paths $P$ but with its capacity greater than the minimal edge capacity along the path, i.e. $c_{e_{xy}}> \min_{e\in P}c_e$ (hence Cayley distance $|\beta_x^{-1}\beta_y|$ vanishes again),
        \item $\mathrm{id}\leq_{\mathrm{nc}}\beta_{x_{1}}\leq_{\mathrm{nc}}\cdots\leq_{\mathrm{nc}}\beta_{x_{|P|-1}}\leq_{\mathrm{nc}}\gamma$ for each path $P\in\mathcal{P}_{A-B}$ (hence the triangle inequality saturates by \cite[Notation 23.21]{nica_lectures_2006}),
    \end{itemize}
    Recall that for a set of edge-disjoint paths $\mathcal{P}_{A-B}$ with a maximal optimal amount of flow, we can have a corresponding max-flow $f$ such that $f(x_i, x_{i+1})=\min_{e\in P}c_e$ for all $(x_i, x_{i+1})\in P$ and $f(x, y)=0$ for the other edges. Then the above three conditions are satisfied if and only if $\beta_x=\beta_y$ if either $f(x, y)=0$ when $e_{xy}$ does not belong to any path $P\in\mathcal{P}_{A-B}$ or $f(x, y)=\min_{e\in P}c_e\neq c_{e_{xy}}$ when $e_{xy}$ belongs to some $P\in\mathcal{P}_{A-B}$ and $\beta_x\leq_{\mathrm{nc}}\beta_y$ if $f(x, y)=\min_{e\in P}c_e=c_{e_{xy}}$. By \cref{def:transitive closure} for the partial orders, we can rewrite it as $\beta_x=\beta_y$ if $x=_fy$ and $\beta_x\leq_{\mathrm{nc}}\beta_y$ if $x\leq_f y$. Thus, $H_{G_{A-B}}(\boldsymbol{\beta})$ is minimized if and only if $\boldsymbol{\beta}$ is an order morphism between $(V, \leq_f)$ and $(\mathcal{S}_n, \leq_{\mathrm{nc}})$ with $\beta_A=\mathrm{id}$ and $\beta_B=\gamma$. Note that we have proved that $\leq_f$ is the same as $\leq_{\mincut}$ in \cref{theorem:equivalence of all of the partial orderings}. Then, $\boldsymbol{\beta}$ is also an order morphism between $(V, \leq_{\mincut})$ and $(\mathcal{S}_n, \leq_{\mathrm{nc}})$, which then completes the proof for the first statement in the theorem. The other statement of the theorem directly follows from the observation that 
    \begin{equation*}
        \begin{split}
            m_n&\coloneqq\lim_{D\to\infty} D^{-\mincut(G_{A-B})}\operatorname{Tr}\left[\left(D^{\mincut(G_{A-B})}\Tilde{\rho}_B\right)^n\right] \\
            &= \sum_{\boldsymbol{\beta}\in\operatorname{Ord}(V(G_{A-B}), \mathcal{S}_n)}\prod_{e_{xy}\in E(G_{A-B})}(r_{e_{xy}})^{-|\beta_x^{-1}\beta_y|}.
        \end{split}
    \end{equation*}
\end{proof}
The theorem implies the following result from \cite[Theorem 5.14]{fitter_max-flow_2024}. 
\begin{corollary}
    \label{corollary:order morphisms}
    If $r_e=1$ for all $e\in E$, then
    \begin{equation*}
        m_n=|\mathrm{Ord}(V, \mathcal{S}_n)|.
    \end{equation*}
\end{corollary}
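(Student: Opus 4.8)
The plan is to derive this corollary as an immediate specialization of \cref{theorem:order morphism}, since almost all of the substantive work has already been carried out there. First I would invoke that theorem directly: it already establishes the exact identity
\begin{equation*}
    m_n = \sum_{\boldsymbol{\beta}\in\mathrm{Ord}(V, \mathcal{S}_n)} w_{G_{A-B}}(\boldsymbol{\beta}), \qquad w_{G_{A-B}}(\boldsymbol{\beta}) = \prod_{e_{xy}\in E} r_{e_{xy}}^{-|\beta_x^{-1}\beta_y|},
\end{equation*}
valid for arbitrary positive parameters $r_e$. The only thing left to do is to evaluate the weight under the hypothesis $r_e = 1$ for every $e \in E$.

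Next I would observe that each factor in the product defining the weight reads $1^{-|\beta_x^{-1}\beta_y|}$. Since $1$ raised to any real exponent equals $1$ regardless of the value of the Cayley distance $|\beta_x^{-1}\beta_y|$, every factor is $1$, and hence $w_{G_{A-B}}(\boldsymbol{\beta}) = 1$ for each order morphism $\boldsymbol{\beta}$. The sum over $\mathrm{Ord}(V, \mathcal{S}_n)$ then collapses to a plain cardinality count, yielding $m_n = |\mathrm{Ord}(V, \mathcal{S}_n)|$, which is exactly the claim. I do not anticipate any obstacle here: the result is a direct corollary, and the genuine difficulty — identifying the minimizers of the Hamiltonian with order morphisms and expressing $m_n$ as their weighted count — was already resolved in the proof of \cref{theorem:order morphism} via \cref{theorem:equivalence of all of the partial orderings}. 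If anything warrants a sentence of care, it is simply noting that the collapse of the weight to $1$ is uniform in $\boldsymbol{\beta}$, so that no combinatorial cancellation or sign issue can arise.
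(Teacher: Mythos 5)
Your proposal is correct and matches the paper's own treatment: the corollary is stated as an immediate consequence of \cref{theorem:order morphism}, obtained by observing that the weight $w_{G_{A-B}}(\boldsymbol{\beta})=\prod_{e_{xy}\in E}r_{e_{xy}}^{-|\beta_x^{-1}\beta_y|}$ equals $1$ for every $\boldsymbol{\beta}$ when all $r_e=1$, so the weighted sum reduces to the cardinality of $\mathrm{Ord}(V,\mathcal{S}_n)$.
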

As an example, consider the random tensor network $\mathcal{G}$ in \cref{fig:the bipartition of the boundary edges} with $D_e=D$ for each edge. The slightly modified graph $G_{A-B}$ is given in \cref{fig:example of a flow network} (be careful with the capacity for each edge) with the Hasse diagram for representing the poset $(V, \leq_{\mincut})$ given in \cref{fig:Hasse diagram}. There are in total $2\times 2+1+3\times 2+3\times 3\times 2= 29$ order morphisms (see \cref{fig:order morphisms} below) between $(V, \leq_{\mincut})$ and $(\mathcal{S}_2, \leq_{\mathrm{nc}})$ and hence $m_2=29$ by \cref{corollary:order morphisms}.
\begin{figure}[H]
    \centering
    \includegraphics[scale=1.15]{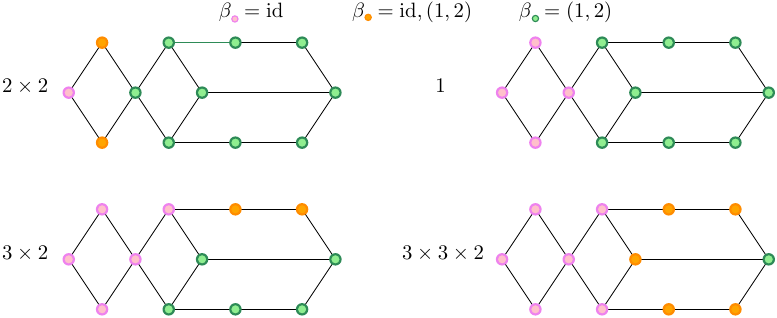}
    \caption{Examples of order morphisms between $(V, \leq_{\mincut})$ and $(\mathcal{S}_2, \leq_{\mathrm{nc}})$ for the Hasse diagram from \cref{fig:Hasse diagram}}
    \label{fig:order morphisms}
\end{figure}

\section{Graph-dependent measures}
\label{sec:graph-dependent measures}
In this section, we discuss the connection between random tensor networks and the free probability theory \cite{voiculescu_free_1992, nica_lectures_2006, mingo_free_2017}. We assume that the readers are familiar with the free probability theory and, in particular, the free additive convolution $\boxplus$ and the free multiplicative convolution $\boxtimes$. These notions are briefly reviewed in \cref{appendix:free-probability} which is strongly recommended for the readers who are not familiar with the free probability theory. The focus of this section is a measure which can be assigned to a series-parallel graph (recall \cref{sec:series-parallel} for series-parallel graphs). Since that graphs have been widely used to represents polynomials of tensors (e.g. random tensor networks), this type of measure corresponds to the limit distribution of some particular random tensors. In the following subsections, we will define and study these measures for series-parallel graphs on which there exists a number $r_e$ for each edge $e$ characterizing the dimension of its associated vector space relatively among all of the edges. These numbers are important since they relatively give an upper bound on the rank of a flattened tensor that could be represented by a graph \cite{cui_quantum_2016} and hence we introduce a tuple $(\mu_G, r_G)$ to track the information contained in a series-parallel graph $G$. 
\subsection{Definition}
Here, we introduce the measure $\mu_G$ and the number $r_G$ associated to a series-parallel graph $G$ more precisely by giving a formal recursive definition. At first, recall that Marchenko-Pastur distribution $\pi_r$ of parameter $r$ (also known as the free Poisson distribution) has the density (also see \cref{fig:MP distribution}) given as follows \cite[Proposition 12.11]{nica_lectures_2006}.
\begin{equation*}
    \pi_{r} = \left\{\begin{array}{ll}
        (1-r)\delta_0 + r\Tilde{\nu} & \text{if $0\leq r\leq 1$} \\
        \Tilde{\nu} & \text{if $r>1$}
    \end{array}\right.
\end{equation*}
where $\Tilde{\nu}$ is the measure supported on $[(1-\sqrt{r})^2, (1+\sqrt{r})^2]$ with density 
\begin{equation*}
    d\Tilde{\nu}(t) = \frac{1}{2\pi  t}\sqrt{4r-(t-(1+r))^2}dt.
\end{equation*} 
This distribution is also the limit $(D\to\infty)$ of the empirical eigenvalue distribution of the Wishart matrix $X^{\dagger}X$ \cite{johnstone_distribution_2001} where $X$ is a rectangular $rD\times D$ matrix with independent identically distributed complex standard random Gaussian entries. Using graphs, these matrices are represented as follows:

\begin{figure}[htb]
    \centering
    \includegraphics[scale=1.15]{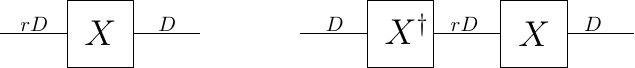}
    \caption{Graphical representation for a Gausisan matrix (left) and a Wishart matrix (right).}
    \label{fig:Wishart matrix}
\end{figure}

\noindent We consider an extension of these rectangular random matrices $X$ by applying the series and parallel compositions introduced in \cref{sec:series-parallel} on independent copies of them. Then the distributions and ranks for the obtained random tensors are related to the tuple $(\mu_G, r_G)$ defined below. We introduce a measure $\mu_G$ associated with any series-parallel graph $G$. This definition is \emph{recursive}. The parallel composition of graphs will correspond to a \emph{classical multiplicative convolution of measures}, reflecting independent systems. The series composition is more intricate, corresponding to a \emph{combination of free multiplicative and additive convolutions} that captures the matrix-like concatenation of random tensors.

\begin{definition}
    \label{def:measure}
    For a series-parallel graph $G$ on which the dimension of the vector space on each edge $e$ is proportional to $r_e$, we associate a tuple $(\mu_G, r_G)$ defined recursively as follows: 
    \begin{itemize}
        \item For the \emph{trivial graph} $G_{triv}=(\{s, t\}, \{e_{st}\})$ on which the dimension of the vector space on the only edge $e_{st}$ is proportional to $r_{e_{st}}$, one has 
        \begin{equation*}
            (\mu_{G_{triv}}, r_{G_{triv}}) \coloneqq (\delta_1, r_{e_{st}})
        \end{equation*}
        \item For the graph $G\seriescon H$ which is the \emph{series composition} of its two subgraphs $G$ and $H$, by letting $r_{\max} = \max\{r_G, r_H\}$ and $r_{\min}=\min\{r_G, r_H\}$, one has 
        \begin{equation*}
            (\mu_{G\seriescon H}, r_{G\seriescon H}) \coloneqq \left(\left(\frac{r_{\min}}{r_G}\odot\mu_G\right)^{\boxplus\frac{r_G}{r_{\min}}}\boxtimes\hat{\pi}_{\frac{r_{\max}}{r_{\min}}}\boxtimes\left(\frac{r_{\min}}{r_H}\odot\mu_H\right)^{\boxplus\frac{r_H}{r_{\min}}}, r_{\min}\right)
        \end{equation*}
        \item For the graph $G\parallelcon H$ which is the \emph{parallel composition} of its two subgraphs $G$ and $H$, one has 
        \begin{equation*}
            (\mu_{G\parallelcon H}, r_{G\parallelcon H}) \coloneqq \left(\mu_G\ast\mu_H, r_G\cdot r_H\right)
        \end{equation*}
    \end{itemize}
    where $\odot$ denotes the rescaling operation (if the random variable $X$ has the distribution $\mu$, then the random variable $rX$ has the distribution $r\odot\mu$), $\hat{\pi}_r$ is the rescaled Marchenko-Pastur distribution having unit mean, that is, $\hat{\pi}_r\coloneqq r^{-1}\odot\pi_r$ (see \cref{fig:MP distribution}), and $\mu_G\ast\mu_H$ is the classical multiplicative convolution. 
\end{definition}

\begin{figure}
    \centering
    \includegraphics[scale=1.15]{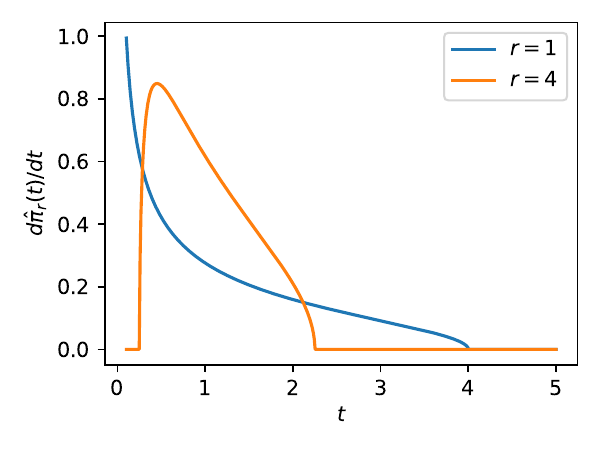}
    \caption{The rescaled Marchenko-Pastur distribution $\hat{\pi}_r=r^{-1}\odot\pi_r$.}
    \label{fig:MP distribution}
\end{figure}

\begin{remark}
    By this definition, we can always associate a tuple $(\mu_G, r_G)$ to a series-parallel graph $G$, as the graph can always be decomposed into a set of trivial graphs, with the tuples $(\mu_{G_{triv}}, r_{G_{triv}})$ for these trivial graphs given, using the reversed series and parallel compositions. The decomposition of a series-parallel graph into trivial graphs is unique up to the associativity of the series composition and the associativity and commutativity of the parallel composition \cite{goos_complete_1997} and hence, the tuple $(\mu_G, r_G)$ for a series-parallel graph $G$ is uniquely defined since one can show that the series compositions and parallel compositions for tuples $(\mu_{H_1}, r_{H_1})$ and $(\mu_{H_2}, r_{H_2})$ are also associative and commutative. The complicated rescaling operation in the series composition $(r^{-1}\odot[\cdot])^{\boxplus r}$ should be understood as projecting the probability space into a smaller space (see \cref{theorem:projection}).
\end{remark}
\begin{remark}
    We also emphasize that the graph-dependent measure $\mu_G$ in our definition is of mean 1 for any series-parallel graph $G$. In fact, the mean of the random tensor represented by the graph $G$ is not necessarily 1 and depends on the relative edge dimensions $\{r_e\}_{e\in E(G)}$. However, by introducing the rescaling parameter $r_G$, it is guaranteed that the mean of the rescaled random tensor is 1, and hence the tensor has the distribution $\mu_G$. Coincidentally, $r_G$ is the asymptotic value of the rank of the random tensor, since according to \cite{hastings_asymptotics_2017}, the rank of a tensor network is asymptotically equal to the minimum value of the product over the edge dimensions of a cut set that separates the source from the sink. 
\end{remark}

In the case where the graph $G$ is a series composition of two trivial graphs, the measure $\mu_G$ is the rescaled Marchenko-Pastur distribution (see \cref{fig:MP distribution}) and the number $r_G$ is the proportion of the largest edge dimension relative to the smallest edge dimension as the dimension goes to infinity. This is consistent with the limit distribution of the (properly rescaled) Wishart matrix $\frac{1}{rD}X^{\dagger}X$ mentioned at the beginning of this section. Hence, it is not hard to see that in general, the measure $\mu_G$ associated with a series-parallel graph $G$ is the empirical eigenvalue distribution of the generalized Wishart matrix that is the product of the random tensor and its adjoint where the former is represented by the graph $G$ with a Gaussian tensor put at each vertex except at $s$ and $t$ (the source and sink of the series-parallel graph). 

\subsection{Measures to path graphs (line graphs)}\label{sec:line-graphs}
In this subsection, we focus on the recursive relation for the series composition appeared in \cref{def:measure}. It is well illustrated by considering the subfamily of measures associated to path graphs (or linear graphs), which are simple yet non-trivial series-parallel graphs such that they are obtained by consecutive series compositions of trivial graphs. The recursive relation is significantly simplified for constructing the measures to path graphs and we can write them out as follows:
\begin{theorem}
    \label{theorem:measures to path graphs}
    If $G$ is a path graph on which the dimension of the vector space on each edge $e$ is proportional to $r_e$, then 
    \begin{equation*}
        \exists e_{\min}\in E(G): r_G=r_{e_{\min}}=\min\{r_e|e\in E(G)\}\quad\text{and}\quad\mu_G = \boxtimes_{e\in E(G)-\{e_{\min}\}}\hat{\pi}_{r_e/r_G}
    \end{equation*}
\end{theorem}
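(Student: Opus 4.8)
The plan is to induct on the number of edges of the path graph $G$ and, crucially, to exploit the fact that the series composition of tuples is both associative and commutative (as recorded in the remark following \cref{def:measure}), so that $(\mu_G, r_G)$ may be computed by composing the trivial graphs corresponding to the edges $e_1,\dots,e_m$ in \emph{any} order. The first assertion is then immediate: since each series composition sets the rank parameter to the minimum of the two ranks, a one-line induction gives $r_G=\min\{r_e\mid e\in E(G)\}$, and we let $e_{\min}$ be any edge attaining this minimum.

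For the measure, the key observation is that if we compose $e_{\min}$ \emph{first} and then adjoin the remaining edges one at a time, every single step falls into the ``easy'' regime in which the newly added edge has rank at least that of the running minimum $r_{\min}\coloneqq r_{e_{\min}}$. Concretely, suppose the running tuple is $(\mu, r_{\min})$ and we compose with a trivial edge $e$ carrying $(\delta_1, r_e)$ with $r_e\ge r_{\min}$. Then in \cref{def:measure} one has $\min=r_{\min}$ and $\max=r_e$, so the first flanking factor is $(1\odot\mu)^{\boxplus 1}=\mu$, the middle factor is $\hat{\pi}_{r_e/r_{\min}}$, and the second flanking factor is $(\tfrac{r_{\min}}{r_e}\odot\delta_1)^{\boxplus r_e/r_{\min}}=\delta_{r_{\min}/r_e}^{\boxplus r_e/r_{\min}}=\delta_1$, using that dilation sends $c\odot\delta_1=\delta_c$ and that the free additive power of a point mass satisfies $\delta_c^{\boxplus s}=\delta_{cs}$ (both clear from the $R$-transform, which for $\delta_c$ is the constant $c$). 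Since $\delta_1$ is the identity for $\boxtimes$, this composition merely appends a factor, $\mu\mapsto \mu\boxtimes\hat{\pi}_{r_e/r_{\min}}$, while the rank stays $r_{\min}$. Accumulating these steps starting from the trivial graph of $e_{\min}$ (whose tuple is $(\delta_1,r_{\min})$) and ranging over the remaining edges yields $\mu_G=\boxtimes_{e\ne e_{\min}}\hat{\pi}_{r_e/r_{\min}}$ with $r_G=r_{\min}$, which is exactly the claim; edges tied with $e_{\min}$ at the minimum simply contribute factors $\hat{\pi}_1$ and cause no difficulty.

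The real content hidden by this argument is that the reordering lets us avoid the genuinely delicate case of the recursion --- the one in which the edge being adjoined is \emph{strictly smaller} than the current minimum, so that the running measure is hit by the nontrivial dilation-and-free-power operation $(\,t^{-1}\odot{\cdot}\,)^{\boxplus t}$ with $t>1$ and then freely multiplied by $\hat{\pi}_t$. Verifying directly that this operation rescales a free multiplicative convolution $\boxtimes_i\hat{\pi}_{\rho_i}$ of Marchenko--Pastur laws into $\boxtimes_i\hat{\pi}_{t\rho_i}$, which is what consistency would demand, is the main obstacle: it would require tracking the interplay between the $R$-transform, under which the power operation acts by $R(z)\mapsto R(z/t)$, and the $S$-transform, under which $\boxtimes$ is multiplicative, and I expect this to be the technically heaviest step were one to proceed order-by-order. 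The commutativity of the tuple operation is precisely what dispenses with it.

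As a self-contained alternative that does not invoke associativity of the abstract tuple operation, one can instead induct by peeling an \emph{endpoint} edge of the path whose rank is at least the minimum of the remaining edges and then reattaching it through the easy step above; such an endpoint always exists because an endpoint fails to be peelable only if it is the strict global minimum of the whole path, and at most one edge can be the strict global minimum, so at least one of the two endpoints is always available.
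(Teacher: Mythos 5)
Your proposal is correct and follows essentially the same route as the paper: the paper's proof also rests on the single observation that $(r^{-1}\odot\delta_1)^{\boxplus r}=\delta_1$, and likewise rebuilds the path by series-composing the trivial edges starting from the edge of minimal $r_e$, so that every step stays in the regime where composition merely appends a factor $\hat{\pi}_{r_e/r_{\min}}$. Your additional remarks (the explicit $R$-transform check for point masses, the discussion of the avoided "strictly smaller new edge'' case, and the endpoint-peeling variant) are consistent elaborations of the same argument rather than a different proof.
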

\begin{proof}
    According to \cref{def:measure}, to obtain the measure to the graph which is the series composition of two subgraphs, we need to compute the free additive convolution of the rescaled measures to the subgraphs. A key observation which helps for the computation is that:  
    \begin{equation*}
        (r^{-1}\odot\delta_1)^{\boxplus r} = \delta_1 \quad \forall r\geq 1.
    \end{equation*}
    Hence, if a graph is a series composition of its subgraph and a trival graph, then by definition, we can find that 
    \begin{equation}
        \label{eqn:a recursive rule on the measure}
        \mu_{G\seriescon G_{triv}} = \mu_G\boxtimes\hat{\pi}_{r_{G_{triv}}/r_G} \quad\text{if}\quad r_{G\seriescon G_{triv}}=r_G.
    \end{equation}
    Now we consider the path graphs. By the reversed series compositions, a path graph can be decomposed into a set of trivial graphs on which the dimension of the vector space on the only edge is proportional to $r_e$ coming from the corresponding edge in the path graph. We can then reconstruct the original path graph by applying the series composition recursively starting around the edge $e_1$ of the minimal dimension $r_{e_1}$. During the reconstruction, we apply \cref{def:measure} to obtain a measure and clearly the condition for \cref{eqn:a recursive rule on the measure} holds at each recursive step. This implies that for $G$ a path graph, 
    \begin{equation*}
        r_G=r_{e_1}\quad\text{and}\quad\mu_G = \delta_1\boxtimes\hat{\pi}_{r_{e_2}/r_{e_1}}\boxtimes\cdots\boxtimes \hat{\pi}_{r_{e_{|E(G)|}}/r_{e_1}}
    \end{equation*}
    The expression in the theorem is then simply an equivalent statement as the above.
\end{proof}
\begin{remark} 
    This theorem is also helpful for understanding the complicated rescaling operation $\left(r^{-1}\odot[\cdot]\right)^{\boxplus r}$ in the recursive relation in \cref{def:measure}. For a measure to a path graph $G$, we know from the theorem that $\mu_G=\boxtimes_{e\in E(G)-\{e_{\min}\}}\hat{\pi}_{r_e/r_G}$. Without losing generality, suppose now that we want to construct the measure for the series composition of $G$ and another path graph $H$ and $r_{\min}=r_H<r_G$. By substituting $\mu_G$ with the measure to a path graph, we can find that  
    \begin{equation*}
        \left.\left(\frac{r_H}{r_G}\odot\mu_G\right)^{\boxplus\frac{r_G}{r_H}}\right|_{\mu_G=\boxtimes_{e\in E(G)-\{e_{\min}\}}\hat{\pi}_{r_e/r_G}}=\boxtimes_{e\in E(G)-\{e_{\min}\}}\hat{\pi}_{r_e/r_H}.
    \end{equation*}
    That is, the operation $\left(r^{-1}\odot[\cdot]\right)^{\boxplus r}$ always tunes the relative dimension $r_e$ of an edge $e$ into the ratio of $r_e$ to the minimum among them $\{r_e\}_{e\in E(G\seriescon H)}$ in the whole graph $G\seriescon H$. Hence, it is easy to see that the definition and the theorem are always consistent with each other.
\end{remark}

The measures associated to path graphs are not completely new, and there are several properties that have been studied in the literature. For one thing, the two-parameter family of the \emph{free Bessel distribution} $\pi_{st}$ with $s\in \mathbb{N}$ and $t\geq 1$ \cite{banica_free_2011} is a special case of the measure to a path graph with $\{r_e\}_{e\in E(G)}=\{\underbrace{1, 1, \dots, 1}_{\text{$s$ times}}, t\}$. That is, 
\begin{equation*}
    \pi_{st} = t\odot\mu_{\includegraphics[scale=1.15]{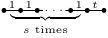}}
\end{equation*}
This distribution has been proposed as a generalization of the Marchenko-Pastur distribution (free Poisson distribution) and it also relates to the Bessel distribution as its classical analog. For another, the measure to a path graph with $\{r_e\}_{e\in E(G)}=\{1, \underbrace{k, k, \dots, k}_{\text{$k$ times}}\}$ converges weakly to a $\boxtimes$-infinitely divisible distribution $\mu$, with $S$ transform $S_{\mu}(z)=\exp(-z)$ (see \cref{appendix:free-probability}), supported on the positive real axis as $k\to\infty$ \cite{anshelevich_local_2014}. The $\boxtimes$-infinitely divisibility of $\mu$ refers to the property that $\forall k\in\mathbb{N}$, there exists a measure $\mu_k$ such that $\mu=\mu_k^{\boxtimes k}$ \cite{bercovici_levy-hinctype_1992}. This property defines a class of measures not only appearing naturally in limit theorems for products of independent random variables in free probability theory, as this class is closed under the free multiplicative convolution $\boxtimes$, but also being related to L\'evy stochastic process. The analytic and combinatorial properties of the free Bessel distribution $\pi_{st}$ and the $\boxtimes$-infinitely divisible distribution $\mu$ have been discussed respectively in \cite{banica_free_2011} and \cite{anshelevich_local_2014}, and we emphasize that the measures of path graphs that we have proposed here interpolate between the two distributions. 

\subsection{Relation to random tensor networks}

We proceed by giving the relation between the graph-dependent measure and random tensor networks discussed in the last section. It then follows from the relation that the combinatorial aspect of the measure has been already presented in \cref{sec:order morphisms}, which we will also summarize here. We start by quickly reviewing random tensor networks. Recall that given an open graph $\mathcal{G}=(\mathcal{V}, \mathcal{E}_b\cup\mathcal{E}_{\partial})$, we can define a random tensor $|\psi_{\mathcal{G}}\rangle\in\bigotimes_{e\in\mathcal{E}_{\partial}}\mathbb{C}^{D_e}$ as in \cref{def:random tensor networks}, where $D_e$ is the dimension of the vector space at the edge $e$ parameterized by $D_e=r_eD^{c_e}$ where $r_e$ resembles those in \cref{def:measure} and $c_e$ resembles those in \cref{def:flows} --- the capacity of the edge. The random tensor can be regarded as an unnormalized entangled state between two subspaces, $\mathcal{H}_A\coloneqq\bigotimes_{e\in\mathcal{E}_{A}}\mathbb{C}^{D_e}$ and $\mathcal{H}_B\coloneqq\bigotimes_{e\in\mathcal{E}_{B}}\mathbb{C}^{D_e}$, once a partition of the set of boundary edges $\mathcal{E}_{\partial}=\mathcal{E}_A\cup\mathcal{E}_B$ is given. To quantify the entanglement of the random tensor network, one can consider the spectra of the reduced matrix or its quasi-normalized version
\begin{equation*}
    \rho_B = \operatorname{Tr}_A(|\psi_{\mathcal{G}}\rangle\langle\psi_{\mathcal{G}}|)\quad\text{and}\quad\Tilde{\rho}_B=\rho_B/\prod_{e\in\mathcal{E}_{\partial}}D_e, 
\end{equation*}
where $\operatorname{Tr}_A$ is the partial trace on the subspace $\mathcal{H}_A$. Note that $\rho_B, \Tilde{\rho}_B$ are $\dim \mathcal{H}_B\times \dim \mathcal{H}_B$ dimensional positive semi-definite random matrices. The distribution of the quasi-normalized matrix is related to the graph-dependent measure according to the following theorem.
\begin{theorem}
    \label{theorem:relation to random tensor networks}
    Let $G_{A-B}$ be the flow network corresponding to $\mathcal{G}$ under a bipartition of the boundary edges $\mathcal{E}_{\partial}=\mathcal{E}_A\cup\mathcal{E}_B$ by \cref{def:modified graph}. If the quotient graph $\vec{G}_{A-B}$ of $G_{A-B}$ defined by \cref{eqn:quotient graph} is series-parallel, then the properly rescaled limit distribution of $\Tilde{\rho}_B$ is the measure to the quotient graph $\mu_{\vec{G}_{A-B}}$ from \cref{def:measure}. That is, 
    \begin{equation}
        \label{eqn:relation between the distribution and moments}
        \varphi_n(\mu_{\vec{G}_{A-B}}) = \lim_{D\to\infty}\frac{1}{r_{\vec{G}_{A-B}}D^{\mincut(G_{A-B})}}\mathbb{E}\operatorname{Tr}\left[(r_{\vec{G}_{A-B}}D^{\mincut(G_{A-B})}\Tilde{\rho}_B)^n\right]
    \end{equation}
    where
    \begin{equation*}
        \varphi_n(\mu)\coloneqq\int_{t\in\mathbb{R}} t^nd\mu.
    \end{equation*}
\end{theorem}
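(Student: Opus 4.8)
The plan is to reduce \cref{theorem:relation to random tensor networks} to a purely combinatorial identity and then verify that identity by structural induction along the series--parallel decomposition of $\vec{G}_{A-B}$. First I would unwind the rescalings in \cref{eqn:relation between the distribution and moments}: abbreviating $r=r_{\vec{G}_{A-B}}$ and $M=\mincut(G_{A-B})$, one has $(rD^{M}\Tilde{\rho}_B)^n=r^nD^{nM}\Tilde{\rho}_B^{\,n}$ inside the trace, so the right-hand side equals $r^{n-1}\lim_{D\to\infty}D^{(n-1)M}\mathbb{E}\operatorname{Tr}(\Tilde{\rho}_B^{\,n})=r^{n-1}m_n$, with $m_n$ the rescaled moment of \cref{eqn:asymptotic expansion of moments}. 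By \cref{theorem:order morphism}, $m_n=\sum_{\boldsymbol{\beta}}w_{G_{A-B}}(\boldsymbol{\beta})$ summed over order morphisms; since such maps are constant on $=_{\mincut}$-classes (vertices comparable in both directions must receive equal permutations) and intra-class edges contribute weight $1$, this weighted count is unchanged on passing from $G_{A-B}$ to its quotient $\vec{G}_{A-B}$. Thus the theorem is equivalent to the identity $\varphi_n(\mu_G)=r_G^{\,n-1}m_n(G)$, which I would establish for every series--parallel graph $G$ and all $n$.

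Two combinatorial ingredients drive the induction. The base case is the trivial graph, where the unique order morphism has weight $r_{e_{st}}^{-(n-1)}$, so $r_{G_{triv}}^{\,n-1}m_n=1=\varphi_n(\delta_1)$, matching \cref{def:measure}. The second is a block-decomposition lemma: for any $\pi\in\mathrm{NC}(n)$, the standard interval isomorphisms $[\mathrm{id},\pi]_{\mathrm{nc}}\cong\prod_{b\in\pi}\mathrm{NC}(|b|)$ and $[\pi,\gamma]_{\mathrm{nc}}\cong\mathrm{NC}(\mathrm{Kr}(\pi))$ (with $\mathrm{Kr}$ the Kreweras complement; see \cref{appendix:free-probability}), together with the additivity of Cayley distance over the blocks of an interval, yield the multiplicative extensions $m^{(\mathrm{id},\pi)}(G)=\prod_{b\in\pi}m_{|b|}(G)$ and $m^{(\pi,\gamma)}(H)=\prod_{b'\in\mathrm{Kr}(\pi)}m_{|b'|}(H)$, where $m^{(\sigma,\tau)}$ denotes the weighted count of order morphisms with boundary values $\sigma\leq_{\mathrm{nc}}\tau$. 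These express every boundary-shifted count through the standard moments $m_k$, so the induction only needs to be carried for the $(\mathrm{id},\gamma)$ boundary.

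For the parallel composition the internal vertices of the two pieces are disjoint and comparable only through the fixed source and sink, so order morphisms on $G\parallelcon H$ biject with pairs of order morphisms and the edge-weight factorizes, giving $m_n(G\parallelcon H)=m_n(G)\,m_n(H)$; since $\mu_G\ast\mu_H$ multiplies classical moments and $r_{G\parallelcon H}=r_Gr_H$, the identity $\varphi_n=r^{\,n-1}m_n$ is preserved. For the series composition I would split order morphisms on $G\seriescon H$ according to the value $\pi$ assigned to the glued middle vertex (which must satisfy $\mathrm{id}\leq_{\mathrm{nc}}\pi\leq_{\mathrm{nc}}\gamma$) and apply the block-decomposition lemma to obtain
\begin{equation*}
    m_n(G\seriescon H)=\sum_{\pi\in\mathrm{NC}(n)}\Big(\prod_{b\in\pi}m_{|b|}(G)\Big)\Big(\prod_{b'\in\mathrm{Kr}(\pi)}m_{|b'|}(H)\Big).
\end{equation*}
It then remains to show that the moments of $\mu_{G\seriescon H}$ from \cref{def:measure} reproduce $r_{\min}^{\,n-1}$ times this Kreweras sum, which I would do by feeding the free cumulants of each factor, computed via $\kappa_k(c\odot\mu)=c^k\kappa_k(\mu)$, $\kappa_k(\mu^{\boxplus t})=t\,\kappa_k(\mu)$, and $\kappa_k(\hat{\pi}_s)=s^{1-k}$, through the $\boxtimes$ moment--cumulant (Kreweras-pairing) expansion. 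A direct warm-up with $H=G_{triv}$ already recovers \cref{eqn:a recursive rule on the measure} and confirms the bookkeeping.

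The hard part will be precisely this last matching for the full three-fold convolution. I expect the main obstacle to be tracking, block by block, how the asymmetric rescalings $(\tfrac{r_{\min}}{r_\bullet}\odot\,\cdot\,)^{\boxplus r_\bullet/r_{\min}}$ on the two sides distribute over $\pi$ and over $\mathrm{Kr}(\pi)$ in the $\boxtimes$ expansion, and in verifying that the central factor $\hat{\pi}_{r_{\max}/r_{\min}}$ supplies exactly the residual $r_{\max}/r_{\min}$-weights so that the free-probabilistic sum collapses onto $r_{\min}^{\,n-1}\sum_{\pi}\prod_{b}m_{|b|}(G)\prod_{b'}m_{|b'|}(H)$ for every $n$. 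Once this asymmetric-rescaling computation is pushed through, all three composition rules respect the identity $\varphi_n(\mu_G)=r_G^{\,n-1}m_n(G)$ and the structural induction closes, proving the theorem.
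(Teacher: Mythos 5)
Your proposal is correct and follows essentially the same route as the paper's proof: reduce to the identity $\varphi_n(\mu_G)=r_G^{\,n-1}m_n$, pass to the quotient graph, and induct along the series--parallel decomposition, splitting series compositions over the non-crossing value at the glued vertex and matching the resulting Kreweras sum against the $\boxtimes$ moment--cumulant expansion. The ``hard part'' you flag is resolved in the paper by exactly the bookkeeping you describe, packaged as the single identity $r^{\#\beta-n}\varphi_{\beta}(\mu)=\kappa_{\beta}\bigl(\hat{\pi}_r\boxtimes(r^{-1}\odot\mu)^{\boxplus r}\bigr)$, using $\#(\alpha^{-1}\beta)=n+\#\beta-\#\alpha$ on the interval $[\alpha,\beta]$ and the fact that one of the two rescaling ratios equals $1$.
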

\begin{remark}
    Note that \cref{eqn:relation between the distribution and moments} can also be written as 
    \begin{equation*}
        \varphi_n(\mu_{\vec{G}_{A-B}}) =(r_{\vec{G}_{A-B}})^{n-1}m_n,
    \end{equation*}
    where $m_n$ is defined in \cref{eqn:asymptotic expansion of moments} and also appears in \cref{theorem:order morphism}.
\end{remark}
\begin{proof}
    First, notice that at finite $D$, 
    \begin{equation*}
        \frac{1}{r_{\vec{G}_{A-B}}D^{\mincut(G_{A-B})}}\mathbb{E}\operatorname{Tr}\left[(r_{\vec{G}_{A-B}}D^{\mincut(G_{A-B})}\Tilde{\rho}_B)^n\right]
    \end{equation*}
    corresponds to the $n$-th moment of the empirical eigenvalue distribution of the random matrix $r_{\vec{G}_{A-B}}D^{\mincut(G_{A-B})}\Tilde{\rho}_B$. The limit of these moments satisfies Carleman's condition and hence, according to \cite[Footnote 3]{hastings_asymptotics_2017} and \cite[Theorem 5.14]{fitter_max-flow_2024}, the limit of the empirical eigenvalue distribution of the random matrix $r_{\vec{G}_{A-B}}D^{\mincut(G_{A-B})}\Tilde{\rho}_B$ exists and the moments of this limit distribution are the limits of the moments in the above equation. Here we denote the limit of the $n$-th moment of the empirical eigenvalue distribution by $\Tilde{m}_{n;G_{A-B}}$. According to \cref{theorem:order morphism}, we can find that 
    \begin{equation*}
        \Tilde{m}_{n;G_{A-B}} = (r_{\vec{G}_{A-B}})^{n-1}m_n = \sum_{\boldsymbol{\beta}\in\operatorname{Ord}(V(G_{A-B}), \mathcal{S}_n)}\frac{\prod_{e_{xy}\in E(G_{A-B})}(r_{e_{xy}})^{-|\beta_x^{-1}\beta_y|}}{(r_{\vec{G}_{A-B}})^{1-n}}
    \end{equation*}
    The above expression can be simplified further. Observe that 
    \begin{itemize}
        \item if $e_{xy}$ is an edge between $x, y$ such that $x=_{\mincut}y$, then $\beta_x=\beta_y$ for any $\boldsymbol{\beta}\in\operatorname{Ord}(V(G_{A-B}), \mathcal{S}_n)$ and hence $(r_{e_{xy}})^{-|\beta_x^{-1}\beta_y|}=1$. This is also equivalent to saying that edges which are not in the quotient graph $\vec{G}_{A-B}$ do not contribute to $\Tilde{m}_{n;G_{A-B}}$;
        \item Since the min-cuts in the graph $G_{A-B}$ and the quotient graph $\vec{G}_{A-B}$ are the same, it implies that $\boldsymbol{\beta}$ is an order morphism between $(V(G_{A-B}), \leq_{\mincut})$ and $(\mathcal{S}_n, \leq_{\mathrm{nc}})$ if and only if it corresponds to an order morphism between $(V(\vec{G}_{A-B}), \leq_{\mincut})$ and $(\mathcal{S}_n, \leq_{\mathrm{nc}})$, i.e. 
        \begin{equation*}
            \operatorname{Ord}(V(G_{A-B}), \mathcal{S}_n) \cong \operatorname{Ord}(V(\vec{G}_{A-B}), \mathcal{S}_n).
        \end{equation*}
    \end{itemize}
    We have then proved that the moments of the limit distribution depend only on the quotient graph $\vec{G}_{A-B}$ and we can find that 
    \begin{equation}
        \label{eqn:moments of limit distribution}
        \Tilde{m}_{n;G_{A-B}}=\Tilde{m}_{n;\vec{G}_{A-B}} = \sum_{\boldsymbol{\beta}\in\operatorname{Ord}(V(\vec{G}_{A-B}), \mathcal{S}_n)}\frac{\prod_{(x, y)\in E(\vec{G}_{A-B})}(r_{(x, y)})^{-|\beta_x^{-1}\beta_y|}}{(r_{\vec{G}_{A-B}})^{1-n}}
    \end{equation}
    
    Now we prove $\varphi_n(\mu_{\vec{G}_{A-B}})=\Tilde{m}_{n;\vec{G}_{A-B}}$ by induction. Consider first when the quotient graph $\vec{G}_{A-B}$ is the trivial graph. \cref{eqn:moments of limit distribution} implies that 
    \begin{equation*}
        \Tilde{m}_{n;\vec{G}_{triv}} = \frac{(r_{(s, t)})^{-|\mathrm{id}^{-1}\gamma|}}{(r_{(s, t)})^{1-n}} = 1 = \int_{t\in\mathbb{R}} t^n d\delta_1=\int_{t\in\mathbb{R}} t^nd\mu_{\vec{G}_{triv}}, 
    \end{equation*}
    which shows the truth of the base case in the induction. Now assume that $\varphi_n(\mu_{\vec{G}_{A-B}})=\Tilde{m}_{n;\vec{G}_{A-B}}$ is true for all quotient graphs with the number of edges less than $k$. Then, consider a series-parallel quotient graph $\vec{G}_{A-B}$ with $k$ edges s.t. $\vec{G}_{A-B}=\vec{H}_{1;A-B}\parallelcon\vec{H}_{2;A-B}$. As $\boldsymbol{\beta}\in\operatorname{Ord}(V(\vec{G}_{A-B}), \mathcal{S}_n)$ can be split into two independent order morphisms $\boldsymbol{\beta}_1\in\operatorname{Ord}(V(\vec{H}_{1;A-B}), \mathcal{S}_n)$ and $\boldsymbol{\beta}_2\in\operatorname{Ord}(V(\vec{H}_{2;A-B}), \mathcal{S}_n)$ and $r_{\vec{G}_{A-B}}=r_{\vec{H}_{1;A-B}}\cdot r_{\vec{H}_{2;A-B}}$ by definition, using \cref{eqn:moments of limit distribution}, we can find that 
    \begin{multline*}
        \Tilde{m}_{n;\vec{G}_{A-B}} = \Tilde{m}_{n;\vec{H}_{1;A-B}}\cdot\Tilde{m}_{n;\vec{H}_{2;A-B}} \\
        = \int_{t\in\mathbb{R}} t^nd\mu_{\vec{H}_{1;A-B}}\cdot\int_{t\in\mathbb{R}} t^nd\mu_{\vec{H}_{2;A-B}}\\
        =\int_{t\in\mathbb{R}} t^n d\mu_{\vec{H}_{1;A-B}}\ast\mu_{\vec{H}_{2;A-B}}=\int_{t\in\mathbb{R}} t^n d\mu_{\vec{G}_{A-B}}.
    \end{multline*}
    This proves the case for the parallel composition. Finally, consider a series-parallel quotient graph $\vec{G}_{A-B}$ with $k$ edges such that $\vec{G}_{A-B}=\vec{H}_{1;A-C}\seriescon\vec{H}_{2;C-B}$. In this case, $\boldsymbol{\beta}\in\operatorname{Ord}(V(\vec{G}_{A-B}), \mathcal{S}_n)$ can be also split into two order morphisms $\boldsymbol{\beta_1}:(V(\vec{H}_{1;A-C}), \leq_{\mincut})\to(\mathcal{S}_n, \leq_{\mathrm{nc}})$ and $\boldsymbol{\beta_2}:(V(\vec{H}_{2;C-B}), \leq_{\mincut})\to(\mathcal{S}_n, \leq_{\mathrm{nc}})$ under that condition that 
    \begin{equation*}
        \boldsymbol{\beta}_{1}(A)=\mathrm{id}\leq_{\mathrm{nc}}\boldsymbol{\beta}_1(C)=\boldsymbol{\beta}_2(C)\leq_{\mathrm{nc}}\boldsymbol{\beta}_2(B)=\gamma
    \end{equation*}
    We write $\boldsymbol{\beta}_1(C)=\boldsymbol{\beta}_2(C)=\beta_C$, and then \cref{eqn:moments of limit distribution} implies that 
    \begin{equation*}
        \begin{split}
            \Tilde{m}_{n;\vec{G}_{A-B}} &= \sum_{\mathrm{id}\leq_{\mathrm{nc}}\beta_C\leq_{\mathrm{nc}}\gamma}\left(\frac{r_{\vec{H}_{1;A-C}}}{r_{\vec{G}_{A-B}}}\right)^{\#\beta_C-n}\Tilde{m}_{\beta_C;\vec{H}_{1;A-C}}\cdot \left(\frac{r_{\vec{H}_{2;C-B}}}{r_{\vec{G}_{A-B}}}\right)^{\#(\beta_C^{-1}\gamma)-n}\Tilde{m}_{\beta_C^{-1}\gamma;\vec{H}_{2;C-B}} \\
            &= \sum_{\mathrm{id}\leq_{\mathrm{nc}}\beta_C\leq_{\mathrm{nc}}\gamma}\left(\frac{r_{\vec{H}_{1;A-C}}}{r_{\vec{G}_{A-B}}}\right)^{\#\beta_C-n}\varphi_{\beta_C}(\mu_{\vec{H}_{1;A-C}})\left(\frac{r_{\vec{H}_{2;C-B}}}{r_{\vec{G}_{A-B}}}\right)^{\#(\beta_C^{-1}\gamma)-n}\varphi_{\beta_C^{-1}\gamma}(\mu_{\vec{H}_{2;C-B}})
        \end{split}
    \end{equation*}
    where we have used the moment $\varphi_{\alpha}(\mu)$ from the extended family --- for each permutation $\alpha$, we associate a moment which is a product of the moments each with the order of the cardinality of a cycle of $\alpha$. The relation between the extended family of moments and the extended family of free cumulants is known as (see \cref{appendix:free-probability})
    \begin{equation*}
        \varphi_{\beta}(\mu) = \sum_{\alpha\leq_{\mathrm{nc}}\beta}\kappa_{\alpha}(\mu), 
    \end{equation*}
    and it further implies the following relation that we need to use in our proof. 
    \begin{equation*}
        \begin{split}
            r^{\#\beta-n}\varphi_{\beta}(\mu) &= \sum_{\alpha\leq_{\mathrm{nc}}\beta}r^{\#\beta-\#\alpha}r^{\#\alpha-n}\kappa_{\alpha}(\mu) \\ 
            &=\sum_{\alpha\leq_{\mathrm{nc}}\beta}r^{\#(\alpha^{-1}\beta)-n}\kappa_{\alpha}((r^{-1}\odot\mu)^{\boxplus r}) \\ 
            &=\sum_{\alpha\leq_{\mathrm{nc}}\beta}\kappa_{\alpha^{-1}\beta}(\hat{\pi}_r)\kappa_{\alpha}((r^{-1}\odot\mu)^{\boxplus r}) \\ 
            &=\kappa_{\beta}(\hat{\pi}_r\boxtimes(r^{-1}\odot\mu)^{\boxplus r}).
        \end{split}
    \end{equation*}
    Let $r_1=\frac{r_{\vec{H}_{1;A-C}}}{r_{\vec{G}_{A-B}}}$, $\mu_1=\mu_{\vec{H}_{1;A-C}}$, $r_2=\frac{r_{\vec{H}_{2;C-B}}}{r_{\vec{G}_{A-B}}}$, and $\mu_2=\mu_{\vec{H}_{2;C-B}}$ and note that either $r_1=1$ or $r_2=1$ since $r_{\vec{G}_{A-B}}=\min\{r_{\vec{H}_{1;A-C}}, r_{\vec{H}_{2;C-B}}\}$ by definition. We can then find that 
    \begin{equation*}
        \begin{split}
            \Tilde{m}_{n;\vec{G}_{A-B}} &= \sum_{\mathrm{id}\leq_{\mathrm{nc}}\beta_C\leq_{\mathrm{nc}}\gamma}\kappa_{\beta_C}(\hat{\pi}_{r_1}\boxtimes(r_1^{-1}\odot\mu_1)^{\boxplus r_1})\kappa_{\beta_C^{-1}\gamma}(\hat{\pi}_{r_2}\boxtimes(r_2^{-1}\odot\mu_2)^{\boxplus r_2}) \\ 
            &= \kappa_n(\hat{\pi}_{r_1}\boxtimes(r_1^{-1}\odot\mu_1)^{\boxplus r_1}\boxtimes\hat{\pi}_{r_2}\boxtimes(r_2^{-1}\odot\mu_2)^{\boxplus r_2}) \\ 
            &= \varphi_n((r_1^{-1}\odot\mu_1)^{\boxplus r_1}\boxtimes\hat{\pi}_{\max\{r_1, r_2\}}\boxtimes(r_2^{-1}\odot\mu_2)^{\boxplus r_2}) \\ 
            &= \varphi_n(\mu_{\vec{G}_{A-B}}).
        \end{split}
    \end{equation*}
    Hence, by induction, we have shown that $\varphi_n(\mu_{\vec{G}_{A-B}})=\Tilde{m}_{n;\vec{G}_{A-B}}$ is true for any series-parallel quotient graph $\vec{G}_{A-B}$. The theorem follows from the fact that 
    \begin{equation*}
        \varphi_n(\mu_{\vec{G}_{A-B}})=\Tilde{m}_{n;\vec{G}_{A-B}} = \Tilde{m}_{n;G_{A-B}} = (r_{\vec{G}_{A-B}})^{n-1}m_{n}.
    \end{equation*}
\end{proof}

Combined with \cref{theorem:order morphism}, one can see that the $n$-th moment of the measure to a series-parallel graph $G$ from \cref{def:measure} can be interpreted as a weighted sum of order morphisms from $(V(G), \leq_{\mincut})$ to $(\mathcal{S}_n, \leq_{\mathrm{nc}})$ and in particular, when all of the edges in $G$ are associated with vector spaces of the same dimension, the $n$-th moment of the measure simply counts the number of order morphisms, as indicated by \cref{corollary:order morphisms}. \cref{theorem:relation to random tensor networks} thus leads to a combinatorial interpretation to the moments of the graph-dependent measure. On the other hand, if we demand \cref{theorem:relation to random tensor networks} holds for any graph, which are not necessarily series-parallel, the recursive relation in \cref{def:measure} still holds, though the measure cannot be decomposed into a set of measures on trivial graphs. A significant open question is how to formally define the graph-dependent measures for arbitrary graphs. Unlike the series-parallel case, general graphs lack a simple recursive decomposition, which prevents the direct application of powerful tools such as free convolution. As a result, new techniques or frameworks may be necessary to extend these constructions to arbitrary graph topologies, making this an intriguing and challenging direction for future research.

There is also a close connection to \cite{cheng_random_2022}. In the case where a random tensor network has a quotient graph that is a series composition of two trivial graphs. We can obtain the distribution of the random tensor represented by this tensor network using \cite[Theorem 3.4]{cheng_random_2022} as $\pi_1\boxtimes\delta_1\boxtimes((1-r_G^{-1})\odot\delta_0+r_G^{-1}\odot\delta_1)$. This distribution is consistent with the graph-dependent measure for the quotient graph since $\hat{\pi}_{r_G} = \pi_1\boxtimes\delta_1\boxtimes((1-r_G^{-1})\odot\delta_0+r_G^{-1}\odot\delta_1)$. 

\bigskip

\noindent\textbf{Acknowledgments.} The authors would like to thank Razvan Gurau for insightful discussions about the topics of the paper. Both authors were supported by the ANR project \href{https://esquisses.math.cnrs.fr/}{ESQuisses}, grant number ANR-20-CE47-0014-01.

\bigskip

\printbibliography

\appendix 
\section{Free probability}
\label{appendix:free-probability}
We review the basic definitions and theorems in free probability theory here. We consider in particular random variables coming from a $C^{\ast}$-probability space $(\mathcal{A}, \varphi)$, where $\mathcal{A}$ is a $C^{\ast}$-algebra over $\mathbb{C}$ and $\varphi$ is a linear functional $\varphi:\mathcal{A}\to\mathbb{C}$ which is 
\begin{itemize}
    \item unital: $\varphi(1_{\mathcal{A}})=1$ where $1_{\mathcal{A}}\cdot a=a\quad\forall a\in\mathcal{A}$;
    \item tracial:. $\varphi(ab)=\varphi(ba)\quad\forall a, b\in\mathcal{A}$;
    \item positive: $\varphi(a^{\ast}a)\geq0\quad\forall a\in\mathcal{A}$. 
\end{itemize}
The functional $\varphi$ plays the role of the expectation of a random variable from the algebra. In the case where $a\in\mathcal{A}$ is self-adjoint, there always exists a measure $\mu_{a}$ supported on $\mathbb{R}$ such that 
\begin{equation*}
    \int_{t\in\mathbb{R}} t^nd\mu_a= \varphi(a^n) \quad\forall n\in\mathbb{N}.
\end{equation*}
Hence we can say that the measure $\mu_a$ is the distribution of the random variable $a$ as all of the moments $\varphi(a^n)$ can be extracted from the measure $\mu_a$. 

Similar to classical probability theory, a set of self-adjoint random variables $\{a_i\}_{i\in I}$ is called \emph{free} or \emph{freely independent} if for any $\{i_j\in I\}_{j=1, \dots, n}$ such that $i_1\neq i_2\neq \cdots \neq i_n$, we have 
\begin{multline*}
    \varphi(p_1(a_{i_1})\cdots p_n(a_{i_n}))=0 \text{ for all polynomials $p_1, \dots, p_n$}\\
    \text{whenever }\varphi(p_j(a_{i_j}))=0 \quad\forall j\in\{1, \dots, n\}.
\end{multline*}
As an example, given two self-adjoint random variables $\{a_1, a_2\}$, if they are free, then  
\begin{equation*}
    0=\varphi((a_1-\varphi(a_1))(a_2-\varphi(a_2)))=\varphi(a_1a_2)-\varphi(a_1)\varphi(a_2), 
\end{equation*}
which indeed generalizes the classical independence. In free probability, one often uses an alternative formulation for freely independence, involving the so-called \emph{free cumulants} \cite[Lecture 11]{nica_lectures_2006}. To introduce the free cumulants, we first start by reviewing the non-crossing partitions of an ordered set. Recall that, a partition of a set consists of blocks of disjoint subsets which covers the whole set by taking the union. For an ordered set of size $n$, a partition of the set then naturally corresponds to a permutation $\alpha\in\mathcal{S}_n$ where a block of the partition corresponds to a cycle of $\alpha$. A partition is \emph{non-crossing} if there do not exist $i<j<k<l$ in the ordered set such that $i$ and $k$ belong to one block of the partition while $j$ and $l$ belong to another. Clearly, $\gamma=(1\cdots n)\in\mathcal{S}_n$ corresponds to a non-crossing partition of the set $(1, \dots, n)$ and if $\alpha\in\mathcal{S}_n$ corresponds to another non-crossing partition, then $\alpha$ must be on the geodesic from $\mathrm{id}$ to $\gamma$ \cite[Lecture 23]{nica_lectures_2006}. In fact, the permutations which correspond to non-crossing partitions are partially ordered. We write $\alpha_1\leq_{\mathrm{nc}}\alpha_2$ if $|\alpha_1|+|\alpha_1^{-1}\alpha_2|=|\alpha_2|$ (recall that $|\cdot|$ is Cayley distance) and it is easy to see that $\leq_{\mathrm{nc}}$ is reflexive, antisymmetric, and transitive. Then, the free cumulant $\kappa_{\alpha}[a_1, \dots, a_n]$ is a polynomial of $a_1, \dots, a_n$ indexed by the permutation $\alpha$ such that   
\begin{equation}
    \label{eqn:moment-cumulant formula}
    \varphi(a_1\cdots a_n) = \sum_{\alpha\leq_{nc}\gamma}\kappa_{\alpha}[a_1, \dots, a_n].
\end{equation}
One can invert the above equation using M\"obius inversion for obtaining the definition of the free cumulants, which we omit here. The point of introducing the free cumulants is that if a set of self-adjoint random variables $\{a_i\}_{i\in I}$ is free, then all of the mixed free cumulants vanish, i.e. 
\begin{equation*}
    \forall n\geq 2:\quad \kappa_{\gamma}[a_{i_1}, \dots, a_{i_n}] = 0\text{ whenever $\exists l, k\in\{1, \dots, n\}: i_l\neq i_k$}. 
\end{equation*}
For simplifying notations, when all of the arguments in $\kappa_{\gamma}[\cdot]$ are the same, we use the following 
\begin{equation*}
    \kappa_n(\mu_a)\coloneqq \kappa_{\gamma}[a, \dots, a]|_{\gamma=(1\cdots n)} \quad\text{and}\quad \kappa_{\alpha}(\mu_a)\coloneqq\prod_{\sigma:\text{cycle of $\alpha$}}\kappa_{\operatorname{card}(\sigma)}(\mu_a). 
\end{equation*}
It is not hard to see from \cref{eqn:moment-cumulant formula} that these cumulants are related to the extended moments $\varphi_{\beta}(\mu_a)$ ($\beta\in\mathcal{S}_n$) of the distribution $\mu_a$ as follows: 
\begin{equation*}
    \varphi_{\beta}(\mu_a)\coloneqq \prod_{\sigma:\text{cycle of $\beta$}}\int_{t\in\mathbb{R}} t^{\operatorname{card}(\sigma)}d\mu_a=\sum_{\alpha\leq_{\mathrm{nc}}\beta}\kappa_{\alpha}(\mu_a). 
\end{equation*}

The free additive convolution $\boxplus$ is a binary operation introduced to combine the distributions of two freely independent random variables additively. More precisely, let $a$ and $b$ be two freely independent self-adjoint random variables with distributions $\mu_a$ and $\mu_b$ supported on $\mathbb{R}$ respectively. Then, the free additive convolution $\mu_a\boxplus\mu_b$ is defined as the distribution of the random variable $a+b$, which is also supported on $\mathbb{R}$. The free cumulants of these measure satisfies the following relation 
\begin{equation*}
    \kappa_{n}(\mu_a\boxplus\mu_b) = \kappa_n(\mu_a) + \kappa_n(\mu_b).
\end{equation*}
For analytical computation, one can use the following theorem which relates these cumulants with the Cauchy transform of their corresponding measures and hence obtain the measure $\mu_a\boxplus\mu_b$ with the help of Stieltjes inversion. 
\begin{theorem}{\cite[Lecture 16]{nica_lectures_2006}}
    The $R$-transform of $\mu$ is the formal power series 
    \begin{equation*}
        R_{\mu}(z) \coloneqq \sum_{n\geq 1}\kappa_{n}(\mu)z^n,
    \end{equation*}
    which can be also obtained by solving the following equation 
    \begin{equation*}
        G_{\mu}[(R_{\mu}(z)+1)/z]=z\text{, with $G_{\mu}(z) \coloneqq \int_{t\in\mathbb{R}}\frac{d\mu}{z-t}$ the Cauchy transformation of $\mu$}.
    \end{equation*}
    The $R$-transform is additive under the free additive convolution, i.e. 
    \begin{equation*}
        R_{\mu_a\boxplus\mu_b}(z) = R_{\mu_a}(z) + R_{\mu_b}(z).
    \end{equation*}
\end{theorem}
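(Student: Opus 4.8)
The plan is to treat the two substantive claims of the statement separately, since the definition of $R_\mu$ as a cumulant generating series is purely notational. The additivity under free convolution is the easy half and is essentially already available: the linearization of free cumulants recorded just above, $\kappa_n(\mu_a\boxplus\mu_b)=\kappa_n(\mu_a)+\kappa_n(\mu_b)$, gives $R_{\mu_a\boxplus\mu_b}(z)=\sum_{n\geq1}\big(\kappa_n(\mu_a)+\kappa_n(\mu_b)\big)z^n=R_{\mu_a}(z)+R_{\mu_b}(z)$ directly from the definition. All the real content therefore lies in the functional equation tying $R_\mu$ to the Cauchy transform $G_\mu$, which I would prove by formal power series manipulations.

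First I would introduce the moment series $M(z):=\sum_{n\geq0}m_nz^n$ with $m_0=1$ and $m_n=\varphi(a^n)=\int_{\mathbb{R}}t^n\,d\mu$, and note that the Cauchy transform is its reciprocal-variable form $G_\mu(z)=\tfrac1zM(1/z)$ (as a series in $1/z$). The key identity to establish is $M(z)=1+R_\mu\big(zM(z)\big)$; this composition is well defined formally since $zM(z)$ has no constant term and $R_\mu(0)=0$. To prove it I would expand $m_n=\sum_{\alpha\leq_{\mathrm{nc}}\gamma}\kappa_\alpha(\mu)$, the single-variable case of the moment--cumulant formula \cref{eqn:moment-cumulant formula}, rewrite this as a sum over non-crossing partitions $\pi$ of $\{1,\dots,n\}$ with weight $\prod_{V\in\pi}\kappa_{|V|}(\mu)$, and condition on the block of $\pi$ that contains the element $1$. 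If that block has $s$ points $1=i_1<\cdots<i_s$, the non-crossing condition forces the remaining points to split into the $s$ gaps cut out by $i_1,\dots,i_s$ and to be partitioned non-crossingly and independently inside each gap, which yields the recursion $m_n=\sum_{s\geq1}\kappa_s\sum_{n_1+\cdots+n_s=n-s}m_{n_1}\cdots m_{n_s}$; multiplying by $z^n$ and summing repackages this exactly as $M(z)-1=\sum_{s\geq1}\kappa_sz^sM(z)^s=R_\mu(zM(z))$.

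From the identity $M(z)=1+R_\mu(zM(z))$ the stated equation is a one-line substitution. Writing $\zeta:=G_\mu(z)$, so that $M(1/z)=z\zeta$, and evaluating the identity at the argument $1/z$ gives $z\zeta=M(1/z)=1+R_\mu\big((1/z)\cdot z\zeta\big)=1+R_\mu(\zeta)$, hence $z=(R_\mu(\zeta)+1)/\zeta$. This exhibits $\zeta\mapsto(R_\mu(\zeta)+1)/\zeta$ as the compositional inverse of $G_\mu$, and after renaming $\zeta$ to $z$ it reads $G_\mu[(R_\mu(z)+1)/z]=z$, precisely the asserted relation.

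The main obstacle is the combinatorial heart of the second paragraph: justifying that erasing the block containing $1$ leaves a genuinely independent non-crossing partition on each of the $s$ gaps, so that the enumeration factors into the product $M(z)^s$. This factorization is exactly what fails for arbitrary (crossing) partitions, and it is the only step carrying real mathematical weight; once the recursion $M=1+R_\mu(zM)$ is secured everything else is formal, with the passage from formal series to a genuine analytic identity near $0$ (respectively near $\infty$ for $G_\mu$) handled by standard Lagrange-inversion and analyticity arguments that I would invoke rather than redo.
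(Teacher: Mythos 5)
Your argument is correct, and the paper itself offers no proof of this statement --- it is quoted as background from the cited reference \cite[Lecture 16]{nica_lectures_2006}, whose proof is exactly the one you reconstruct: additivity from the linearization $\kappa_n(\mu_a\boxplus\mu_b)=\kappa_n(\mu_a)+\kappa_n(\mu_b)$ stated just above the theorem, and the functional equation from the gap-decomposition of a non-crossing partition at the block containing $1$, yielding $M(z)=1+R_\mu(zM(z))$ and then $G_\mu[(R_\mu(z)+1)/z]=z$ via $G_\mu(z)=\tfrac1zM(1/z)$. No gaps; you have also correctly isolated the only step with real content, namely that the non-crossing condition forces the factorization $M(z)^s$ over the $s$ gaps.
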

As an example, the $R$-transform of the free Poisson distribution with rate $r$ and jump size $s$, 
\begin{equation*}
    \pi_{r; s} \coloneqq \lim_{N\to\infty}\left(\left(1-\frac{r}{N}\right)\delta_0+\frac{r}{N}\delta_{s}\right)^{\boxplus N}, 
\end{equation*}
is given as 
\begin{equation*}
    R_{\pi_{r;s}}(z) = \sum_{n\geq 1}\kappa_n(\pi_{r;s})z^n = \sum_{n\geq 1}rs^nz^n = \frac{rs z}{1-s z}. 
\end{equation*}
In particular, the free Poisson distribution with rate $r$ and jump size $s=1$ is also Marchenko-Pastur distribution we have introduced in \cref{sec:graph-dependent measures}, where we write $\pi_{r;1}$ as $\pi_r$ and $r^{-1}\odot\pi_r$ as $\hat{\pi}_{r}$. 

Similarly, in the multiplicative case, we consider $a$ and $b$ to be two freely independent \emph{positive} self-adjoint random variables with distributions $\mu_a$ and $\mu_b$ supported on $\mathbb{R}^{+}$. The free multiplicative convolution $\mu_a\boxtimes\mu_b$ is defined as the distribution of the random variable $\sqrt{a}b\sqrt{a}$ (or equivalently $\sqrt{b}a\sqrt{b}$), which is also supported on $\mathbb{R}^{+}$ as $\sqrt{a}b\sqrt{a}$ is positive and self-adjoint. Note that the moments of $\sqrt{a}b\sqrt{a}$ are the same as those of $ab$ while, unlike the former, the latter is not, in general, self-adjoint given $a$ and $b$ self-adjoint. The free cumulants of these measure satisfy the following relation
\begin{equation*}
    \kappa_n(\mu_a\boxtimes\mu_b) = \sum_{\alpha\leq_{\mathrm{nc}}\gamma}\kappa_{\alpha}(\mu_a)\kappa_{\alpha^{-1}\gamma}(\mu_b),
\end{equation*}
where $\alpha^{-1}\gamma$ is known as the Kreweras complement of $\alpha$. For analytical computation, one often needs to compute the $S$-transforms of two measures to obtain their convolution. 
\begin{theorem}{\cite[Lecture 18]{nica_lectures_2006}}
    The $S$-transform of $\mu$ is 
    \begin{equation*}
        S_{\mu}(z) \coloneqq \frac{1}{z}R_{\mu}^{-1}(z),
    \end{equation*}
    where $R_{\mu}^{-1}$ is the inversion of $R_{\mu}$ under function composition. The $S$-transform is multiplicative under the free multiplicative convolution, i.e. 
    \begin{equation*}
        S_{\mu_a\boxtimes\mu_b}(z)=S_{\mu_a}(z)S_{\mu_b}(z).
    \end{equation*}
\end{theorem}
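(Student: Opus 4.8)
The plan is to reduce the analytic statement to a single formal power series identity and then prove that identity from the Kreweras-complement cumulant formula already recorded above. Since $R_\mu(z) = \kappa_1(\mu)z + O(z^2)$ with $\kappa_1(\mu) = \varphi(a) \neq 0$ the mean (here all relevant measures have positive mean, e.g.\ mean $1$), the compositional inverse $R_\mu^{-1}$ exists as a formal power series, so $S_\mu(z) = z^{-1}R_\mu^{-1}(z)$ is well defined. Unwinding the definition shows that the target $S_{\mu_a\boxtimes\mu_b}(z) = S_{\mu_a}(z)S_{\mu_b}(z)$ is equivalent to the identity between compositional inverses
\[
    R_{\mu_a\boxtimes\mu_b}^{-1}(z) = \frac{1}{z}\,R_{\mu_a}^{-1}(z)\,R_{\mu_b}^{-1}(z),
\]
so it suffices to establish this single formula.

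I would next set up a bridge to moments, which both confirms that the definition $S_\mu = z^{-1}R_\mu^{-1}$ coincides with the classical $S$-transform and recasts the problem usefully. Writing $M_\mu(z) = 1 + \sum_{n\geq 1}\varphi_n(\mu)z^n$, the moment--cumulant relation $\varphi_\beta(\mu) = \sum_{\alpha\leq_{\mathrm{nc}}\beta}\kappa_\alpha(\mu)$ in its generating-function form is the functional equation $M_\mu(z) = 1 + R_\mu(zM_\mu(z))$. Setting $\psi_\mu = M_\mu - 1$ and inverting yields $\chi_\mu(z) = R_\mu^{-1}(z)/(1+z)$, where $\chi_\mu$ is the compositional inverse of $\psi_\mu$, whence $S_\mu(z) = \tfrac{1+z}{z}\chi_\mu(z)$. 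This step is routine formal manipulation and leaves me with two interchangeable descriptions of $S_\mu$, one through $R_\mu^{-1}$ and one through the moment series.

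The heart of the argument is the generating-function evaluation of the Kreweras-complement convolution. By the stated formula, $\kappa_n(\mu_a\boxtimes\mu_b) = \sum_{\alpha\in NC(n)}\kappa_\alpha(\mu_a)\,\kappa_{\alpha^{-1}\gamma}(\mu_b)$, where $\alpha\mapsto\alpha^{-1}\gamma$ is the Kreweras complement, an order-reversing bijection (lattice anti-automorphism) of $NC(n)$ satisfying $\#\alpha + \#(\alpha^{-1}\gamma) = n+1$. I would assemble $R_{\mu_a\boxtimes\mu_b}(z) = \sum_n\kappa_n(\mu_a\boxtimes\mu_b)z^n$ by exploiting the recursive structure of $NC(n)$: conditioning on the block of $\alpha$ containing the element $1$ splits the remaining points into nested and outer intervals, and the Kreweras pairing distributes these contributions between the two measures so that the sum factorizes. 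Passing to compositional inverses converts the resulting functional relation into the product formula of the first paragraph. Equivalently, one may run the same factorization at the level of the $\chi$-series from the second paragraph, where the duality becomes transparent.

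The main obstacle is exactly this factorization: turning the intertwined Kreweras-complement sum over $NC(n)$ into a clean multiplicative relation. The difficulty is that the complement couples the block-size data of $\alpha$ and of $\alpha^{-1}\gamma$, so a term-by-term expansion does not separate the two measures; one must use that Kreweras complementation is a lattice anti-automorphism and combine it with a Lagrange-inversion argument for $R_\mu^{-1}$. I expect to isolate this as a standalone lemma (the formal-power-series content of \cite[Lecture 18]{nica_lectures_2006}): the cumulant convolution induced by Kreweras complementation is linearized to multiplication precisely by the map $R_\mu\mapsto z^{-1}R_\mu^{-1}$, after which the theorem follows from the reduction above. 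As a cross-check at each stage I would verify the identity on the free Poisson family $\pi_{r;s}$, where $R_{\pi_{r;s}}(z) = rsz/(1-sz)$ gives $S_{\pi_{r;s}}(z) = 1/\bigl(s(r+z)\bigr)$ and multiplicativity is directly visible.
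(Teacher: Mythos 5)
The paper does not prove this statement at all: it is quoted as background from \cite[Lecture 18]{nica_lectures_2006}, so your attempt can only be judged on its own merits. Your reductions are correct and match the standard framework. The observation that $\kappa_1(\mu)=\varphi(a)\neq 0$ makes $R_\mu^{-1}$ exist as a formal power series is the right standing hypothesis; the equivalence of $S_{\mu_a\boxtimes\mu_b}=S_{\mu_a}S_{\mu_b}$ with $R^{-1}_{\mu_a\boxtimes\mu_b}(z)=z^{-1}R^{-1}_{\mu_a}(z)R^{-1}_{\mu_b}(z)$ is immediate; and your bridge via $M_\mu(z)=1+R_\mu(zM_\mu(z))$, giving $\chi_\mu(z)=R_\mu^{-1}(z)/(1+z)$ and hence $S_\mu(z)=\frac{1+z}{z}\chi_\mu(z)=z^{-1}R_\mu^{-1}(z)$, correctly identifies the paper's definition with the classical one. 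The free Poisson cross-check is also right: from $R_{\pi_{r;s}}(z)=rsz/(1-sz)$ one gets $R^{-1}_{\pi_{r;s}}(w)=w/(s(r+w))$ and $S_{\pi_{r;s}}(w)=1/(s(r+w))$, consistent with the appendix.

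However, there is a genuine gap, and you have in effect flagged it yourself: the entire content of the theorem is the factorization of the Kreweras-coupled sum
\begin{equation*}
    \kappa_n(\mu_a\boxtimes\mu_b)=\sum_{\alpha\leq_{\mathrm{nc}}\gamma}\kappa_{\alpha}(\mu_a)\,\kappa_{\alpha^{-1}\gamma}(\mu_b)
\end{equation*}
into the product formula for compositional inverses, and your proposal defers exactly this to a ``standalone lemma'' that is never proved. The heuristic you offer --- condition on the block of $\alpha$ containing $1$, split the remaining points into nested and outer intervals, and let the Kreweras pairing ``distribute'' the contributions --- does not work as stated: the blocks of $\alpha^{-1}\gamma$ do not respect the interval decomposition induced by the first block of $\alpha$ in a way that separates the $\mu_a$-data from the $\mu_b$-data, which is precisely the coupling you acknowledge. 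Making this rigorous is the substance of \cite[Lecture 18]{nica_lectures_2006} (the statement that the ``Fourier transform'' $f\mapsto \frac{1+z}{z}f^{\langle -1\rangle}(z)$ linearizes the boxed convolution $\sum_{\alpha\leq_{\mathrm{nc}}\gamma}f(\alpha)g(\alpha^{-1}\gamma)$), and its proof requires a nontrivial induction with relative Kreweras complements or a Lagrange-inversion computation; alternatively one can bypass combinatorics entirely with an operator model (realize $a$ and $b$ on a free product and compute alternating moments) or Haagerup's analytic proof. As written, your argument is a correct change of variables plus an unproven assertion of the theorem's core, so it does not yet constitute a proof.
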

As an example, the $S$-transform of the free Poisson distribution is given as 
\begin{equation*}
    S_{\pi_{r;s}}(z)=\frac{1}{s(r+z)}. 
\end{equation*}
and the $S$-transform of the rescaled Marchenko-Pastur distribution $\hat{\pi}_r$ is given as 
\begin{equation*}
    S_{\hat{\pi}_r}(z) = \frac{r}{r+z}. 
\end{equation*}
Finally, the free additive and multiplicative convolutions allow us to relate the cumulants on a projected subspace with those on the whole space, which would explain \cref{def:measure}. 
\begin{theorem}{\cite[Theorem 14.10]{nica_lectures_2006}}
    \label{theorem:projection}
    Consider a $C^{\ast}$-probability space $(\mathcal{A}, \varphi)$ and random variables $p, a\in\mathcal{A}$ such that $p$ is a projection and freely independent from $a$. Then, on the $C^{\ast}$-probability subspace $(p\mathcal{A}p, \varphi^{(p\mathcal{A}p)}\coloneqq r\varphi)$, one can find that  
    \begin{equation*}
        \kappa_n^{(p\mathcal{A}p)}(\mu_{pap}) = \kappa_n((r^{-1}\odot\mu_a)^{\boxplus r}), 
    \end{equation*}
    where $r=1/\varphi(p)$ so that $\varphi^{(p\mathcal{A}p)}(p1_{\mathcal{A}}p)=1$ as $\varphi^{(p\mathcal{A}p)}$ is required to be unital. 
\end{theorem}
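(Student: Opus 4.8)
The plan is to first strip the analytic packaging off the right-hand side and reduce \cref{theorem:projection} to a single combinatorial identity for free cumulants, and then to prove that identity by expanding one moment of $pap$ in two ways and matching. For the reduction, recall that free cumulants are homogeneous under dilation, $\kappa_n(c\odot\mu)=c^{n}\kappa_n(\mu)$, and additive under free convolution powers, $\kappa_n(\mu^{\boxplus r})=r\,\kappa_n(\mu)$ (the latter holds for $r\ge 1$ because $R_{\mu^{\boxplus r}}=r\,R_\mu$; note $r=1/\varphi(p)\ge 1$ since $0\le p\le 1$). Applying these with $c=r^{-1}$ gives
\[
  \kappa_n\big((r^{-1}\odot\mu_a)^{\boxplus r}\big)=r\cdot r^{-n}\,\kappa_n(\mu_a)=r^{\,1-n}\kappa_n(\mu_a),
\]
and since $r=1/\varphi(p)$ this equals $\varphi(p)^{\,n-1}\kappa_n(\mu_a)$. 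Writing $t:=\varphi(p)$, so that $\psi:=\varphi^{(p\mathcal{A}p)}=t^{-1}\varphi$ is unital on $p\mathcal{A}p$, the theorem is therefore \emph{equivalent} to the compression identity $\kappa_n^{\psi}(pap,\dots,pap)=t^{\,n-1}\kappa_n^{\varphi}(a,\dots,a)$, which is what I would prove.

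The strategy is to compute the compressed moment $\psi\big((pap)^{n}\big)$ in two ways. On one side, the moment--cumulant formula in $(p\mathcal{A}p,\psi)$ gives $\psi\big((pap)^{n}\big)=\sum_{\sigma\in NC(n)}\prod_{V\in\sigma}\kappa^{\psi}_{|V|}(pap)$, from which $\kappa_n^{\psi}(pap)$ is determined by M\"obius inversion. On the other side, using $p^{2}=p$ we have $\psi\big((pap)^{n}\big)=t^{-1}\varphi(p\,a\,p\,a\cdots a\,p)$, an alternating word of $2n+1$ letters with $a$ at the $n$ even positions and $p$ at the $n+1$ odd positions. Expanding this word by the joint moment--cumulant formula and discarding every partition with a mixed block---these vanish because $p$ and $a$ are freely independent (as reviewed in \cref{appendix:free-probability})---leaves a sum over non-crossing partitions $\pi$ of the $2n+1$ positions with \emph{monochromatic} blocks. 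Each such $\pi$ splits uniquely as $\pi=\sigma_a\sqcup\tau_p$, with $\sigma_a$ a non-crossing partition of the $a$-positions and $\tau_p$ a non-crossing partition of the $p$-positions, subject to the constraint that $\sigma_a\sqcup\tau_p$ is itself non-crossing.

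The heart of the argument is to carry out the inner sum over $\tau_p$ for a fixed $a$-part $\sigma_a\in NC(n)$. The alternating geometry forces the $n+1$ $p$-positions to decouple into the blocks of the Kreweras complement $K(\sigma_a)$: within each such region the constraint allows $\tau_p$ to be an arbitrary non-crossing partition, while no block of $\tau_p$ may join two distinct regions. Here the fact that $p$ is a \emph{projection} is decisive: summing the projection cumulants over each region independently yields $\sum_{\rho\in NC(m)}\prod_{W\in\rho}\kappa_{|W|}(p)=\varphi(p^{m})=t$, the same value $t$ for every region size $m$. Since $|K(\sigma_a)|=n+1-|\sigma_a|$, the inner sum collapses to $t^{\,n+1-|\sigma_a|}$, whence
\[
  \psi\big((pap)^{n}\big)
   = t^{-1}\sum_{\sigma\in NC(n)} t^{\,n+1-|\sigma|}\prod_{V\in\sigma}\kappa_{|V|}(a)
   = \sum_{\sigma\in NC(n)}\ \prod_{V\in\sigma}\big(t^{\,|V|-1}\kappa_{|V|}(a)\big).
\]
Comparing this block-by-block with the compressed expansion $\sum_{\sigma\in NC(n)}\prod_{V\in\sigma}\kappa^{\psi}_{|V|}(pap)$ and invoking uniqueness of the free cumulants forces $\kappa^{\psi}_m(pap)=t^{\,m-1}\kappa_m(a)$ for all $m$, which is the compression identity.

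I expect the main obstacle to be the projection bookkeeping of the previous paragraph: rigorously establishing that, for fixed $\sigma_a$, the admissible $p$-structures factor as independent non-crossing partitions of the $|K(\sigma_a)|$ Kreweras-complement regions, and that the three sources of powers of $t$---the renormalization $\psi=t^{-1}\varphi$, the projection moments $\varphi(p^{m})=t$, and the block count of $\sigma_a$---combine to exactly $t^{\,n-1}$. A cleaner route that systematizes this telescoping is to invoke the Krawczyk--Speicher formula for free cumulants with products as entries: it expresses $\kappa_n^{\varphi}(pap,\dots,pap)$ as a sum over non-crossing partitions that freeness forces to be monochromatic, organizes the $p$-blocks through the Kreweras complement automatically, and reduces the entire projection input to the single identity $\varphi(p^{k})=t$. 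Either way, the essential mechanism---and the precise place where being a projection, rather than a general free element, is used---is the collapse of the $p$-region sums to a pure power of $\varphi(p)$.
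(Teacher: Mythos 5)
Your proposal is correct, but note that the paper does not prove \cref{theorem:projection} at all: it is quoted verbatim from the literature (Nica--Speicher, Theorem 14.10), so the comparison is against the cited source rather than against an in-paper argument. Your reduction is exactly right: by dilation homogeneity and $R$-transform additivity, $\kappa_n\bigl((r^{-1}\odot\mu_a)^{\boxplus r}\bigr)=r^{1-n}\kappa_n(\mu_a)=t^{n-1}\kappa_n(\mu_a)$ with $t=\varphi(p)$, so the theorem is equivalent to the compression identity $\kappa_n^{\psi}(pap)=t^{n-1}\kappa_n(a)$, and your proof of that identity is essentially the textbook route: expand the alternating word by the joint moment--cumulant formula, kill mixed blocks by freeness, and resum the $p$-part over each Kreweras-complement region, where the projection property $\varphi(p^{m})=t$ for all $m\geq 1$ collapses each region to a single factor of $t$ independent of its size. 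One small piece of bookkeeping deserves care and is stated a bit loosely: your $n+1$ $p$-positions on the line do not literally carry $K(\sigma_a)\in NC(n)$ (one region absorbs both the outermost gaps, so the regions partition $n+1$ points into $|K(\sigma_a)|=n+1-|\sigma_a|$ classes), but since each region contributes the same value $t$ regardless of its cardinality, only the count matters and your exponent $t^{\,n+1-|\sigma_a|}$ is correct; one should also verify that the ``joinable'' relation on $p$-positions is transitive and that choices in distinct regions never cross, both of which hold and are standard. Your suggested alternative via the Krawczyk--Speicher products-as-arguments formula is indeed the cleaner systematization and is close to how the result is organized in the cited source; either version is a complete and valid proof of the quoted statement.
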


\end{document}